\newtheorem{definition}{Definition}
\newtheorem{lemma}{Lemma}
\newtheorem{theorem}{Theorem}
\begin{document}

\title{Classically Simulating Quantum Circuits with\\ Local Depolarizing Noise}

\author{Yasuhiro Takahashi \ \ \ Yuki Takeuchi \ \ \ Seiichiro Tani\\
NTT Communication Science Laboratories, NTT Corporation\\
{\small\tt \{yasuhiro.takahashi.rb,yuki.takeuchi.yt,seiichiro.tani.cs\}@hco.ntt.co.jp}}

\date{}

\maketitle

\begin{abstract}
We study the effect of noise on the classical simulatability of quantum circuits defined by computationally 
tractable (CT) states and efficiently computable sparse (ECS) operations. Examples of such circuits, which we call 
CT-ECS circuits, are IQP, Clifford Magic, and conjugated Clifford circuits. This means 
that there exist various CT-ECS circuits such that their output probability distributions are anti-concentrated and not classically 
simulatable in the noise-free setting (under plausible assumptions). 
First, we consider a noise model where a depolarizing channel with an arbitrarily small constant rate is applied 
to each qubit at the end of computation. We show that, under this noise model, if an approximate value of the 
noise rate is known, any CT-ECS circuit with an anti-concentrated output probability distribution is 
classically simulatable. This indicates that the presence of small noise drastically affects the 
classical simulatability of CT-ECS circuits. Then, we consider an extension of the noise model where the noise rate 
can vary with each qubit, and provide a similar sufficient condition for classically simulating CT-ECS circuits 
with anti-concentrated output probability distributions.
\end{abstract}

\section{Introduction}

\subsection{Background and Main Results}

A key step toward realizing a large-scale universal quantum computer is to demonstrate 
quantum computational supremacy~\cite{Harrow}, i.e., to perform computational tasks that are classically hard. 
As such a task, many researchers have focused on simulating quantum circuits, or more concretely, 
sampling the output probability distributions of quantum circuits. They have shown that, under plausible 
complexity-theoretic assumptions, this task is classically hard for various quantum circuits that 
seem easier to implement than universal ones. However, 
these classical hardness results have been obtained in severely restricted settings, such as a 
noise-free setting with additive approximation~\cite{Bremner2,Takeuchi,Bouland,Yoganathan} and a noise setting with 
multiplicative approximation~\cite{Fujii2}: the former requires us to sample the output probability distribution of a quantum circuit 
with additive error and the latter to sample the output probability distribution of a quantum circuit under a 
noise model with multiplicative error. Thus, there is great interest in considering the above task in a more reasonable setting.

We study the classical simulatability of quantum circuits in a noise setting with additive approximation, which requires us to 
sample the output probability distribution of a quantum circuit under a noise model with additive error. This setting is 
more reasonable than the noise-free setting with additive approximation since the presence of noise is unavoidable in realistic 
situations. Moreover, our setting is more reasonable than a noise setting with multiplicative approximation in the sense that 
we adopt a more realistic notion of approximation~\cite{Aaronson1,Bremner2}, 
although noise in this paper is more restrictive than that in~\cite{Fujii2}. We consider a noise model where a depolarizing channel 
with an arbitrarily small constant rate $0 < \varepsilon < 1$, which is denoted as $D_{\varepsilon}$, 
is applied to each qubit at the end of computation. This channel leaves a qubit unaffected 
with probability $1-\varepsilon$ and replaces its state with the completely mixed one with probability $\varepsilon$. 
We call this model noise model~{\bf A}. We also consider its extension where the noise rate 
can vary with each qubit. More concretely, when a quantum circuit has $n$ qubits, $D_{\varepsilon_j}$ is 
applied to the $j$-th qubit at the end of computation for any $1 \leq j \leq n$. We call this model noise model~{\bf B}. 
These noise models are simple, but analyzing them is a meaningful step toward studying more 
general models~\cite{Gao}, such as one where noise exists before and after each gate in a quantum circuit. 
This is because, for example, this general noise model is equivalent to noise model~{\bf A} when we focus on 
instantaneous quantum polynomial-time (IQP) circuits, which are described below, with a particular type of 
intermediate noise~\cite{Bremner3}.

A representative example of a quantum circuit that is not classically simulatable (in the noise-free setting) is an 
IQP circuit, which consists of $Z$-diagonal gates sandwiched by two Hadamard 
layers. In fact, there exists an IQP circuit such that its output probability distribution is anti-concentrated 
and not classically samplable in polynomial time with certain constant accuracy in $l_1$ norm (under 
plausible assumptions)~\cite{Bremner2}. On the other hand, Bremner et al.~\cite{Bremner3} studied the classical simulatability 
of IQP circuits under noise model~{\bf A}.\footnote{Bremner et al.\ also dealt with quantum circuits for Simon's algorithm. 
Our results can be directly extended to such circuits with access to an oracle, although we omit the details for simplicity.} 
They showed that, if the {\it exact} value of the noise rate is known, any IQP circuit with an anti-concentrated 
output probability distribution is classically simulatable in the sense that the resulting probability distribution is classically 
samplable in polynomial time with arbitrary constant accuracy in $l_1$ norm. This indicates that, 
under noise model~{\bf A}, if the exact value of the noise rate is known, the presence of small 
noise drastically affects the classical simulatability of IQP circuits.

In this paper, first, under a weaker assumption on the knowledge of the noise rate, we extend Bremner et al.'s 
result to quantum circuits that are defined by two concepts: computationally 
tractable (CT) states and efficiently computable sparse (ECS) operations~\cite{van2}. Examples of such circuits, 
which we call CT-ECS circuits, are IQP circuits, Clifford Magic circuits~\cite{Yoganathan}, and 
conjugated Clifford circuits~\cite{Bouland}. This means that there exist various CT-ECS circuits such that their output 
probability distributions are anti-concentrated and not classically simulatable in the sense 
described above for IQP circuits (under plausible assumptions). Constant-depth quantum circuits~\cite{Terhal,Bremner,Juan} are 
also CT-ECS circuits and not classically simulatable, although we do not know whether their output probability 
distributions are anti-concentrated. We postpone the explanation of CT states and ECS operations 
until Section~\ref{preliminaries}, but, as depicted in Fig.~\ref{figure1}(a), 
a CT-ECS circuit on $n$ qubits is a polynomial-size quantum circuit $C=VU$ such that $U|0^n\rangle$ is CT and 
$V^\dag Z_j V$ is ECS for any $1 \leq j \leq n$, where $Z_j$ is a Pauli-$Z$ operation on the $j$-th qubit. 
After performing $C$, we perform $Z$-basis measurements on all qubits. 
The CT-ECS circuit $C$ under noise model~{\bf B} is depicted in Fig.~\ref{figure1}(b).

Our first result assumes noise model~{\bf A}, which corresponds to the case where $\varepsilon_j = \varepsilon$ 
for any $1 \leq j \leq n$ in Fig.~\ref{figure1}(b). We show that, if an {\it approximate} 
value of the noise rate is known, any CT-ECS circuit with an anti-concentrated output probability distribution is 
classically simulatable:
\begin{theorem}[informal]\label{basic}
Let $C$ be an arbitrary CT-ECS circuit on $n$ qubits such that its output probability 
distribution $p$ is anti-concentrated, i.e., $\sum_{x\in \{0,1\}^n} p(x)^2 \leq \alpha/2^n$ 
for some known constant $\alpha \geq 1$. We assume that 
a depolarizing channel with (possibly unknown) constant rate $0 < \varepsilon < 1$ is applied to each 
qubit after performing $C$, which yields the probability distribution $\widetilde{p}_{\rm A}$. Moreover, we assume that 
it is possible to choose a constant $\lambda$ such that 
$$1 \leq \frac{\varepsilon}{\lambda} \leq 1+ c,$$
where $c$ is a certain constant depending on $\alpha$. 
Then, $\widetilde{p}_{\rm A}$ is classically samplable in polynomial time with constant accuracy in $l_1$ norm.
\end{theorem}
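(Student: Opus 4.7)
The plan is to adapt the Fourier-analytic strategy used by Bremner, Montanaro, and Shepherd for IQP circuits to the broader CT-ECS setting, and to replace exact knowledge of $\varepsilon$ by the approximation $\lambda$.

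The first step is to expand $\widetilde{p}_{\rm A}(x)$ in the Pauli-$Z$ basis. Writing $|x\rangle\langle x|=2^{-n}\sum_{s\in\{0,1\}^n}(-1)^{x\cdot s}Z^s$ and noting that the depolarizing channel acts in the Heisenberg picture as $Z_j\mapsto(1-\varepsilon)Z_j$, one obtains
\[
\widetilde{p}_{\rm A}(x) \;=\; \frac{1}{2^n}\sum_{s\in\{0,1\}^n}(-1)^{x\cdot s}(1-\varepsilon)^{|s|}\hat\sigma(s),
\qquad \hat\sigma(s)=\langle 0^n|U^\dagger V^\dagger Z^s V U|0^n\rangle,
\]
where $|s|$ denotes Hamming weight. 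Combined with Parseval's identity, the anti-concentration hypothesis yields $\sum_s\hat\sigma(s)^2\leq\alpha$.

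Second, I would replace the unknown $(1-\varepsilon)^{|s|}$ by the computable $(1-\lambda)^{|s|}$ and truncate the sum to weights $|s|\leq k$, defining a signed approximation $q(x)$. Splitting $\widetilde{p}_{\rm A}-q$ into a truncation part and a perturbation part, then applying Parseval followed by Cauchy--Schwarz to each, gives
\[
\|\widetilde{p}_{\rm A}-q\|_1 \;\leq\; \sqrt{\alpha}\,(1-\varepsilon)^{k+1}
  \;+\; \sqrt{\alpha}\,\max_{1\leq m\leq k}\bigl|(1-\lambda)^m-(1-\varepsilon)^m\bigr|.
\]
Since $\lambda\leq\varepsilon$, the mean-value theorem bounds the maximum by $k(\varepsilon-\lambda)\leq kc\lambda\leq kc$. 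Fixing a target accuracy $\delta$, I would choose $k$ so that $(1-\lambda)^{k+1}\leq\delta/(2\sqrt\alpha)$ (which also bounds the truncation term, as $1-\varepsilon\leq 1-\lambda$), and then impose $c\leq\delta/(2k\sqrt\alpha)$, forcing the total $l_1$-error below $\delta$.

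Third, I would verify that $q$ can be sampled in polynomial time. There are $\sum_{j\leq k}\binom{n}{j}=\mathrm{poly}(n)$ coefficients $\hat\sigma(s)$ to compute; for each, the operator $V^\dagger Z^s V$ is a product of $|s|\leq k$ ECS operators and is itself ECS (its per-row sparsity is $\mathrm{poly}(n)^k=\mathrm{poly}(n)$, and its nonzero entries are efficiently enumerable), while $U|0^n\rangle$ is CT. Hence, by van den Nest's estimator for expectations of ECS operators on CT states, each $\hat\sigma(s)$ can be approximated to inverse-polynomial precision in polynomial time. Sampling from $q$ then proceeds by the chain rule: marginalizing over any variable $x_j$ simply restricts the Fourier sum to $s_j=0$, so each one-variable conditional reduces to summing the $\mathrm{poly}(n)$ precomputed coefficients; negative or over-unit conditional probabilities are clamped and renormalized at each step, producing a sampling distribution whose distance from $q$ is absorbed into the overall $l_1$-error budget.

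The main obstacle is the error bookkeeping in the second step: one must simultaneously choose $k$ large enough to absorb the unknown $\varepsilon$ and choose $c$ small enough (depending only on $\alpha$ and the target $\delta$) so that the perturbation term is controlled without making $k$ grow with $n$. A secondary concern is setting the per-coefficient precision of the $\hat\sigma(s)$ estimates so that the cumulative error across $\mathrm{poly}(n)$ estimates and across the chain-rule sampling still fits inside the $l_1$-error budget.
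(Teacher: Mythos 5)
Your proposal follows essentially the same route as the paper: the Pauli-$Z$ Fourier expansion of $\widetilde{p}_{\rm A}$ with $\widehat{p}(s)=2^{-n}\langle 0^n|C^\dag Z^sC|0^n\rangle$, truncation to weight $\leq k$ with $(1-\varepsilon)^{|s|}$ replaced by $(1-\lambda)^{|s|}$, Parseval plus $\|f\|_1\leq 2^{n/2}\|f\|_2$ and anti-concentration to control both error terms, van den Nest's estimator for the constant-weight coefficients (using that a product of $O(1)$ ECS operators is ECS), and the chain-rule sampler with clamping of negative mass. One small caution: in the perturbation bound you should keep the form $k(\varepsilon-\lambda)\leq kc\lambda$ rather than relaxing to $kc$, since $k\lambda=O(\log(\sqrt{\alpha}/\delta))$ is a constant independent of $\lambda$, which is exactly what lets the admissible $c$ depend only on $\alpha$ and $\delta$ as the theorem requires (imposing $c\leq\delta/(2k\sqrt{\alpha})$ with $k\sim\lambda^{-1}\log(2\sqrt{\alpha}/\delta)$ would make $c$ depend on $\lambda$).
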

\noindent
Throughout the paper, the base of the logarithm is 2. If $\varepsilon$ is known, we can choose $\lambda=\varepsilon$ 
in Theorem~\ref{basic}. This case with IQP circuits precisely corresponds to Bremner et al.'s 
result~\cite{Bremner3}. As described above, there exist various CT-ECS circuits such that 
their output probability distributions are anti-concentrated and not classically simulatable in the noise-free setting 
(under plausible assumptions). Thus, Theorem~\ref{basic} indicates that, under noise model~{\bf A}, if an {\it approximate} 
value of the noise rate is known, the presence of small noise drastically affects the classical simulatability of CT-ECS circuits.

\begin{figure}
\centering
\includegraphics[scale=.37]{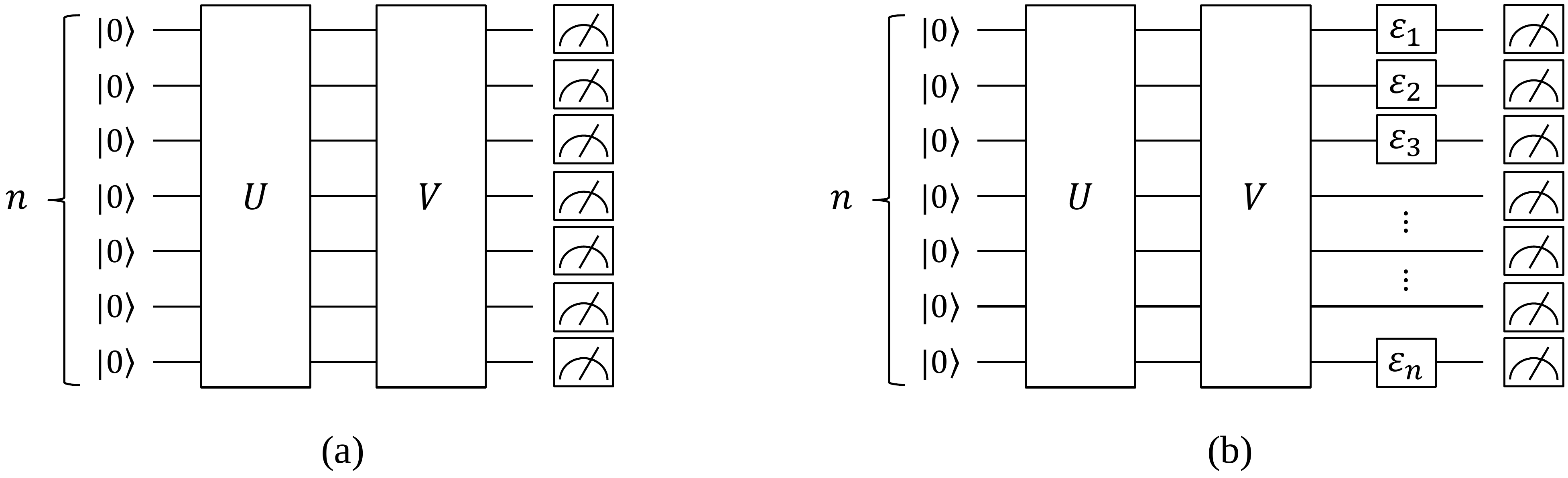}
\caption{(a): CT-ECS circuit $C=VU$, where $U|0^n\rangle$ is CT and $V^\dag Z_j V$ is ECS for any $1\leq j \leq n$. 
After performing $C$, we perform $Z$-basis measurements on all qubits. 
(b): CT-ECS circuit $C=VU$ under noise model~{\bf B}, where $\varepsilon_j$ represents the depolarizing 
channel $D_{\varepsilon_j}$ for any $1\leq j \leq n$.}
\label{figure1}
\end{figure}

Theorem~\ref{basic} assumes noise model~{\bf A} where noise exists only at the end of computation, but, in some cases, 
it can be applied to an input-noise model. For example, Theorem~\ref{basic} holds for IQP circuits when noise model~{\bf A} is 
replaced with a noise model where $D_{\varepsilon}$ is applied to each qubit only at the {\it start} of computation, although 
Bu et al.'s main result implies a similar property of IQP circuits~\cite{Bu}. 
Moreover, Theorem~\ref{general}, which is described below and assumes noise 
model~{\bf B}, also holds for IQP circuits when noise model~{\bf B} is replaced with an input-noise model where the noise rate can 
vary with each qubit. Our main result is similar in spirit to Bu et al.'s, which provides classical algorithms for 
simulating Clifford circuits with nonstabilizer product input states (corresponding to input-noise models). However, we note that, 
in general, it is difficult to relate the output probability distributions of CT-ECS circuits under output-noise models to those of 
Clifford circuits under input-noise models.

Our main focus is on a noise setting, but, from a purely theoretical point of view, it is valuable to analyze the classical 
simulatability of quantum circuits in the noise-free setting. The proof method of Theorem~\ref{basic} is based on computing 
the Fourier coefficients of an output probability distribution, and is useful in the noise-free setting. In fact, it implies that, 
when only $O(\log n)$ qubits are measured, 
any quantum circuit in a class of CT-ECS circuits on $n$ qubits is classically simulatable. More precisely, its output probability 
distribution is classically samplable in polynomial time with polynomial accuracy in $l_1$ norm. This class of CT-ECS 
circuits is defined by a restricted version of ECS operations, and includes IQP, Clifford Magic, conjugated 
Clifford, and constant-depth quantum circuits. It is known that the above property or a similar one holds for 
these quantum circuits (although the notions of approximation vary), but the proofs provided have depended on each 
circuit class~\cite{Terhal,Bremner,Koh,Bouland}. Our analysis 
unifies the previous ones and clarifies a class of quantum circuits for which the above property holds.

Our second result assumes noise model~{\bf B}, which is depicted in Fig.~\ref{figure1}(b). For classically simulating CT-ECS circuits 
with anti-concentrated output probability distributions, we provide a sufficient condition, which is similar to Theorem~\ref{basic}:
\begin{theorem}[informal]\label{general}
Let $C$ be an arbitrary CT-ECS circuit on $n$ qubits such that its output probability 
distribution $p$ satisfies $\sum_{x\in \{0,1\}^n} p(x)^2 \leq \alpha/2^n$ 
for some known constant $\alpha \geq 1$. We assume that a depolarizing 
channel with (possibly unknown) constant rate $0 < \varepsilon_j < 1$ is applied to the $j$-th qubit after performing $C$ for 
any $1 \leq j \leq n$, which yields the probability distribution $\widetilde{p}_{\rm B}$. Moreover, we assume that it is possible to 
choose a constant $\lambda_{\min}$ such that 
$$1 \leq \frac{\varepsilon_{\min}}{\lambda_{\min}} 
\leq 1+c,$$
where $\varepsilon_{\min} = \min \{\varepsilon_j |1 \leq j \leq n\}$ and $c$ is a certain constant depending on $\alpha$, and 
we assume that it is possible to choose a constant $\lambda_j$ such that 
$$0 \leq \varepsilon_j-\lambda_j \leq c\lambda_{\min}$$
for any $1 \leq j \leq n$ with $\varepsilon_j \neq \varepsilon_{\min}$. 
Then, $\widetilde{p}_{\rm B}$ is classically samplable in polynomial time with constant accuracy in $l_1$ norm.
\end{theorem}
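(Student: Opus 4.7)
The plan is to adapt the Fourier-analytic approach of Theorem~\ref{basic} to non-uniform depolarizing rates. I would begin by expanding the noisy distribution in its Pauli--Fourier form
\[
\widetilde{p}_{\rm B}(x)=\frac{1}{2^n}\sum_{s\in\{0,1\}^n}(-1)^{s\cdot x}\,\beta_s\,\hat{p}(s),\quad \beta_s=\prod_{j:s_j=1}(1-\varepsilon_j),\quad \hat{p}(s)=\mathrm{Tr}[\rho Z^s],
\]
where $\rho=VU|0^n\rangle\langle 0^n|U^\dagger V^\dagger$ is the ideal output state. Anti-concentration together with Parseval yields $\sum_s\hat{p}(s)^2\le\alpha$, and the attenuation $\beta_s\le(1-\varepsilon_{\min})^{|s|}$ suppresses the high-weight Fourier mass of $\widetilde{p}_{\rm B}$ --- the feature I will exploit.

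Next I would introduce the truncated, $\lambda$-substituted quasi-distribution
\[
q(x)=\frac{1}{2^n}\sum_{|s|\le k}(-1)^{s\cdot x}\,\gamma_s\,\hat{p}(s),\qquad \gamma_s=\prod_{j:s_j=1}(1-\lambda_j),
\]
and bound $\|\widetilde{p}_{\rm B}-q\|_1$ by splitting it through an intermediate $q_{\rm trunc}$ that keeps the truncation but retains $\beta_s$ rather than $\gamma_s$. Using $\|f\|_1\le\sqrt{2^n}\|f\|_2$ and Fourier orthogonality, the truncation piece is at most $(1-\varepsilon_{\min})^{k}\sqrt{\alpha}\le(1-\lambda_{\min})^{k}\sqrt{\alpha}$, which falls below $\delta/2$ once $k$ is a sufficiently large constant depending only on $\alpha,\lambda_{\min},\delta$. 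For the substitution piece I would expand
\[
\gamma_s-\beta_s=\prod_{j:s_j=1}\bigl[(1-\varepsilon_j)+(\varepsilon_j-\lambda_j)\bigr]-\beta_s,
\]
apply $\varepsilon_j-\lambda_j\le c\lambda_{\min}$ to obtain $0\le\gamma_s-\beta_s\le(1+c\lambda_{\min})^{|s|}-1\le 2c\lambda_{\min}k$ whenever $|s|\le k$ and $c\lambda_{\min}k$ is small, and re-run the same Cauchy--Schwarz/Parseval step to bound this piece by $2c\lambda_{\min}k\sqrt{\alpha}$. With $\delta$ fixed to the target constant accuracy and $k,\lambda_{\min}$ already chosen, taking $c$ small enough as a function of $\alpha$ pushes the substitution piece below $\delta/2$ as well, yielding the claimed dependence $c=c(\alpha)$.

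Finally, I would turn $q$ into an efficient sampler. Marginalizing $q$ over the last $n-m$ coordinates kills every Fourier term whose $s$ has support outside the first $m$ coordinates, leaving at most $\binom{m}{k}=\mathrm{poly}(n)$ surviving terms because $k$ is a constant. Each coefficient $\hat{p}(s)=\langle 0^n|U^\dagger V^\dagger Z^s V U|0^n\rangle$ is then computable in polynomial time by composing the at most $k$ ECS operators $V^\dagger Z_j V$ for $j$ with $s_j=1$ and pairing the resulting sparse operator with the CT state $U|0^n\rangle$. A chained-marginal sampler that clips negative conditional probabilities to zero and renormalizes (in the style of Bremner--Montanaro--Shepherd) then produces a genuine distribution within an $O(1)$ factor of $\|\widetilde{p}_{\rm B}-q\|_1$ from $\widetilde{p}_{\rm B}$ in $l_1$ distance. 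The main obstacle I expect is the substitution estimate: in the non-uniform setting the $|s|$ per-qubit approximation errors compound multiplicatively, so the hypothesis $\varepsilon_j-\lambda_j\le c\lambda_{\min}$ --- uniform in $j$ and measured against the smallest known rate --- is exactly what keeps the compounded error linear in $|s|$ rather than exponential, and identifying this as the right quantitative hypothesis is the crux of the argument.
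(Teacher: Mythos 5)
Your argument is correct in outline, but it takes a genuinely different route from the paper. The paper first \emph{factors} noise model {\bf B} into noise model {\bf A} at the minimum rate followed by residual independent bit flips, writing $\widetilde{p}_{\rm B}=T^1_{\delta_1}\cdots T^n_{\delta_n}\widetilde{p}_{\rm A}$ with $\delta_j=(\varepsilon_j-\varepsilon_{\min})/(1-\varepsilon_{\min})$; it then reuses the Theorem~\ref{basic} sampler for $\widetilde{p}_{\rm A}$ as a black box and post-processes each output bit with an approximate flip probability $\delta_j'/2$ built from the $\lambda_j$'s, controlling the mismatch $|\delta_j-\delta_j'|$ via the contraction property of the noise operators. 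You instead build a single truncated surrogate whose Fourier coefficients carry the per-qubit attenuations $\prod_{j:s_j=1}(1-\lambda_j)$ directly and run the chained-marginal sampler on it once. Your route is shorter and, notably, avoids the $1/(1-\lambda_{\min})$ degradation in the paper's final accuracy bound, since you never divide by $1-\varepsilon_{\min}$; what the paper's route buys is modularity (Theorem~\ref{basic} is invoked verbatim, and the residual noise has a clean operational meaning as classical post-processing of samples). Three points to tighten in your write-up: (i) the coefficients are only \emph{approximable}, to inverse-polynomial additive error with exponentially small failure probability (Theorem~\ref{van2-2}), so an extra term $\sum_{|s|\le k}|\widehat{p}(s)-\widehat{p}'(s)|^2$ must be carried through the Parseval bound exactly as in the proof of Theorem~\ref{basic}; (ii) the substitution estimate is cleanest via the telescoping inequality $|\prod_j a_j-\prod_j b_j|\le\sum_j|a_j-b_j|$ for numbers in $[0,1]$, which gives $|\gamma_s-\beta_s|\le k\,c\,\lambda_{\min}$ directly, without the $(1+c\lambda_{\min})^{|s|}$ detour (which implicitly requires $1-\varepsilon_j$ bounded away from zero); and (iii) the remark about marginalizing over the last $n-m$ coordinates is a red herring here --- the relevant fact is simply that $q$ has at most $n^{k}+1$ nonzero Fourier coefficients, so each partial sum $S_y$ in the chained sampler is computable in polynomial time. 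Finally, your bound needs $|\varepsilon_j-\lambda_j|\le c\lambda_{\min}$ also for the index attaining $\varepsilon_{\min}$; this follows from the first hypothesis by taking $\lambda_j=\lambda_{\min}$ there, but it is worth saying explicitly.
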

\noindent
To the best of our knowledge, this is the 
first analysis of the classical simulatability of quantum circuits under noise model~{\bf B}. Theorem~\ref{general} indicates 
that, under this noise model, if approximate values of the minimum noise rate and the other noise rates are known, 
the presence of small noise drastically affects the classical simulatability of CT-ECS circuits.

\subsection{Overview of Techniques}

To prove Theorem~\ref{basic}, we generalize Bremner et al.'s proof for IQP circuits~\cite{Bremner3}. There 
are two key points. The first one is to provide a general method for approximating the Fourier coefficients of the 
output probability distribution $p$. It is known that the probability distribution $\widetilde{p}_{\rm A}$, which we want to
approximate, can be simply represented by the noise rate $\varepsilon$ and the Fourier 
coefficients $\widehat{p}(s)$ of $p$ for all $s\in\{0,1\}^n$~\cite{Odonnel-book}. We show that, for any CT-ECS circuit 
on $n$ qubits, there exists a polynomial-time classical algorithm for 
approximating each of the low-degree Fourier coefficients of $p$, i.e., $\widehat{p}(s)$ for any $s=s_1\cdots s_n \in \{0,1\}^n$ 
with $\sum_{j=1}^ns_j=O(1)$. Bremner et al.~\cite{Bremner3} showed that such an algorithm exists 
for IQP circuits through a direct calculation of the Fourier coefficients for them. In contrast, we first 
provide a general relation between a quantum circuit and the Fourier coefficients of its output probability distribution. 
We then approximate each of the low-degree Fourier coefficients by combining this general relation with Nest's classical algorithm for 
approximating the inner product value of a particular form defined by a CT state and an ECS operation~\cite{van2}. This 
general relation seems to be a new tool to investigate the output probability distribution of a quantum circuit 
and thus may be of independent interest.

The second key point is to approximate $\widetilde{p}_{\rm A}$ using an approximate value 
of $\varepsilon$. We define a {\it function} $q$ that seems to be close to $\widetilde{p}_{\rm A}$ on the basis 
of its representation with $\varepsilon$ and the Fourier coefficients $\widehat{p}(s)$ for all $s\in\{0,1\}^n$. 
Unfortunately, in contrast to Bremner et al.'s setting, we do not know $\varepsilon$. Thus, using an approximate value $\lambda$ of 
$\varepsilon$, we first choose an appropriate (polynomial) number of 
the low-degree Fourier coefficients used in $q$, and define $q$ based on the above representation of 
$\widetilde{p}_{\rm A}$. More precisely, this number depends on $\lambda$, the constant $\alpha$ associated with 
the anti-concentration assumption, and the desired approximation accuracy. 
We then evaluate the approximation accuracy of $q$. 
Here, we need to care about the error caused by the difference between $\lambda$ and $\varepsilon$, and we 
upper-bound this error using the anti-concentration assumption.

To prove Theorem~\ref{general}, we represent the effect of noise under noise model~{\bf B} as 
the effect of noise under noise model~{\bf A} with rate $\varepsilon_{\min}$ 
and the remaining effects. We do this by transforming the representation of the probability distribution $\widetilde{p}_{\rm B}$, 
which we want to approximate, with several basic 
properties of noise operators on real-valued functions over $\{0,1\}^n$~\cite{Odonnel-book}. The obtained representation 
means that, to sample $\widetilde{p}_{\rm B}$, it suffices to sample the probability 
distribution $\widetilde{p}_{\rm A}$ (resulting from $p$) 
under noise model~{\bf A} with rate $\varepsilon_{\min}$ and then to classically simulate noise corresponding to 
the remaining effects. By Theorem~\ref{basic}, $\widetilde{p}_{\rm A}$ is classically samplable in polynomial time 
with arbitrary constant accuracy in $l_1$ norm. Moreover, we can simulate noise corresponding 
to the remaining effects using the approximate values of $\varepsilon_{\min}$ and $\varepsilon_j$'s, which are not equal 
to $\varepsilon_{\min}$.

\section{Preliminaries}\label{preliminaries}

\subsection{Quantum Circuits and Their Output Probability Distributions}\label{circuit}

Pauli matrices $X$, $Y$, $Z$, and $I$ are
$$
X=
\begin{pmatrix}
0 & 1 \\
1 & 0
\end{pmatrix},\
Y=
\begin{pmatrix}
0 & -i \\
i & 0
\end{pmatrix},\
Z=
\begin{pmatrix}
1 & 0 \\
0 & -1
\end{pmatrix},\
I=
\begin{pmatrix}
1 & 0 \\
0 & 1
\end{pmatrix}.
$$
The Hadamard operation $H$ is defined as $H=(X+Z)/\sqrt{2}$. For any real number $\theta$, 
the rotation operations $R_x(\theta)$ and $R_z(\theta)$ are defined as
$$
R_x(\theta) = \cos\frac{\theta}{2}\,I -i\sin\frac{\theta}{2}\,X,\ 
R_z(\theta) = \cos\frac{\theta}{2}\,I -i\sin\frac{\theta}{2}\,Z.
$$
In particular, $R_z(\pi/4)$ and $R_z(\pi/2)$ are denoted as $T$ and $S$, respectively. It is 
easy to verify that the inverse of $R_x(\theta)$ is $R_x(-\theta)$ and, similarly, the inverse 
of $R_z(\theta)$ is $R_z(-\theta)$. The controlled-$Z$ operation $CZ$ and 
the controlled-controlled-$Z$ operation $CCZ$ are defined as
$$
CZ = |0\rangle\langle 0|\otimes I + |1\rangle\langle 1|\otimes Z,\ 
CCZ = |00\rangle\langle 00|\otimes I + |01\rangle\langle 01|\otimes I +
|10\rangle\langle 10|\otimes I + |11\rangle\langle 11|\otimes Z,
$$
where the states $|0\rangle$ and $|1\rangle$ are $\pm 1$-eigenstates of $Z$, respectively. 
The operations $H$, $S$, and $CZ$ are called Clifford operations. 

A quantum circuit consists of elementary gates, each of which is in the gate set $\mathcal G$. Here, 
$\mathcal G=\{R_x(\theta),R_z(\theta),CZ|\theta =\pm 2\pi /2^t \ {\rm with \ integer} \ t\geq 1\}$. 
Each $R_x(\theta)$ has its inverse in $\mathcal G$ and so does $R_z(\theta)$. Moreover, each of 
the one-qubit operations $X$, $Y$, $Z$, $I$, and $H$ can be decomposed into a constant 
number of $R_x(\theta)$'s and $R_z(\theta)$'s (up to an unimportant global phase). 
Similarly, $CCZ$ can be decomposed into a constant number of $R_x(\theta)$'s, $R_z(\theta)$'s, 
and $CZ$'s. Since the gate set $\{H,T,CZ\}$ is 
approximately universal for quantum computation~\cite{Nielsen}, so is $\mathcal G$. 
The complexity measures of a quantum circuit are its size and depth. The size of a 
quantum circuit is the number of elementary gates in the circuit. To define the depth, we 
regard the circuit as a set of layers $1,\ldots,d$ consisting of elementary gates, where gates 
in the same layer act on pairwise disjoint sets of qubits and any gate in layer $j$ is applied 
before any gate in layer $j+1$. The depth is defined as the smallest possible value 
of $d$~\cite{Fenner}.

We deal with a (polynomial-time) uniform family of polynomial-size quantum 
circuits $\{C_n\}_{n\geq 1}$. Each $C_n$ has $n$ input qubits initialized to $|0^n\rangle$. 
After performing $C_n$, we perform  $Z$-basis measurements on all qubits. 
The output probability distribution $p$ of $C_n$ over $\{0,1\}^n$ is defined as $p(x)=|\langle x|C_n|0^n\rangle|^2$. 
A symbol denoting a quantum circuit also denotes its matrix representation in the $Z$ basis. 
The (polynomial-time) uniformity means that the function $1^n \mapsto \overline{C_n}$ is computable by a polynomial-time 
classical Turing machine, where $\overline{C_n}$ is the classical description of $C_n$~\cite{Nishimura-tcs}.

\subsection{Fourier Expansions and Effects of Noise}\label{fourier}

Let $f:\{0,1\}^n \to {\mathbb R}$ be an arbitrary (real-valued) function. Then, $f$ can be 
uniquely represented as an ${\mathbb R}$-linear combination of $2^n$ basis functions
$$f(x)=\sum_{s \in\{0,1\}^n}\widehat{f}(s)(-1)^{s\cdot x},$$
which is called the Fourier expansion of $f$~\cite{Odonnel-book}. Here, $\widehat{f}(s)$ is called the Fourier 
coefficient of $f$ and the symbol ``$\cdot$'' represents the inner product of two $n$-bit strings, i.e., 
$s\cdot x =\sum_{j=1}^n s_jx_j$ for any $s=s_1\cdots s_n$, $x=x_1\cdots x_n\in \{0,1\}^n$. It holds that
$$\widehat{f}(s)=\frac{1}{2^n}\sum_{x \in\{0,1\}^n}f(x)(-1)^{s\cdot x}$$
for any $s\in\{0,1\}^n$. The $l_k$ norm of $f$ is defined as
$$||f||_k = \left(\sum_{x \in\{0,1\}^n}|f(x)|^k\right)^{1/k},$$
where $k=1,2$. It is known that $||f||_1^2 \leq 2^n||f||_2^2 = 2^{2n}||\widehat{f}||_2^2$~\cite{Odonnel-book}. 

Let $C$ be an arbitrary quantum circuit on $n$ qubits and $p$ be its output 
probability distribution over $\{0,1\}^n$. We consider $C$ under noise model~{\bf A} 
where a depolarizing channel $D_\varepsilon$ with rate $0 < \varepsilon < 1$ is applied to each qubit 
after performing $C$. Here, $D_\varepsilon(\rho) = (1-\varepsilon)\rho + \varepsilon\frac{I}{2}$ for any density 
operator $\rho$ of a qubit. We perform $Z$-basis measurements on all qubits and let $\widetilde{p}_{\rm A}$ be the 
resulting probability distribution over $\{0,1\}^n$. 
As shown in~\cite{Bremner3}, we can sample $\widetilde{p}_{\rm A}$ by sampling an $n$-bit string 
according to $p$ and then flipping each bit of the string with probability $\varepsilon/2$. This implies the following Fourier 
expansion of $\widetilde{p}_{\rm A}$~\cite{Odonnel-book}:
$$\widetilde{p}_{\rm A}(x)=\sum_{s \in\{0,1\}^n}(1-\varepsilon)^{|s|}\widehat{p}(s)(-1)^{s\cdot x},$$
where $|s|=\sum_{j=1}^ns_j$ for any $s=s_1\cdots s_n \in \{0,1\}^n$. We also consider $C$ under noise model~{\bf B} 
where $D_{\varepsilon_j}$ with rate $0 < \varepsilon_j < 1$ is applied to the $j$-th qubit after performing $C$ for any $1 \leq j \leq n$. 
We perform $Z$-basis measurements on all qubits and let $\widetilde{p}_{\rm B}$ be the resulting probability distribution over $\{0,1\}^n$. 
As with $\widetilde{p}_{\rm A}$, we can sample $\widetilde{p}_{\rm B}$ by sampling an $n$-bit string 
according to $p$ and then flipping its $j$-th bit with probability $\varepsilon_j/2$ for any $1 \leq j \leq n$. 
The Fourier expansion of $\widetilde{p}_{\rm B}$ is as follows~\cite{Odonnel-book}:
$$\widetilde{p}_{\rm B}(x)=\sum_{s \in\{0,1\}^n}\left[\prod_{j=1}^n  (1-\varepsilon_j)^{s_j}\right]\widehat{p}(s)(-1)^{s\cdot x}.$$

\subsection{CT States and ECS Operations}\label{idea}

We introduce CT states and (a restricted version of) ECS operations~\cite{van2}. Let $|\varphi\rangle$ be an arbitrary (pure) quantum 
state on $n$ qubits and $p$ be the probability distribution over $\{0,1\}^n$ defined as $p(x)=|\langle x |\varphi\rangle|^2$. Then, 
$|\varphi\rangle$ is CT if $p$ is classically samplable in polynomial time and, for any $x \in\{0,1\}^n$, 
$\langle x |\varphi\rangle$ is classically computable in polynomial time.\footnote{For simplicity, we require perfect accuracy 
in sampling probability distributions and computing values, although irrational numbers may be involved. Precisely 
speaking, it suffices to require exponential accuracy. This is also applied to similar situations in this paper, 
such as the definition of ECS operations.} An example of a CT state is a product state.

Let $U$ be an arbitrary quantum operation on $n$ qubits that 
is both unitary and Hermitian. The operation $U$ is sparse if there exists a polynomial $s$ in $n$ such that, 
for any $x \in \{0,1\}^n$, $U|x\rangle$ is a linear combination of at 
most $s(n)$ computational basis states. When $U$ is a sparse operation (associated with $s$), 
for any $1 \leq j \leq s(n)$, we define two functions, $\beta_j:\{0,1\}^n \to {\mathbb C}$ and 
$\gamma_j:\{0,1\}^n \to \{0,1\}^n$,  as follows: for any $x \in \{0,1\}^n$, 
if the $j$-th non-zero entry exists in the column indexed by $x$ when traversing this column 
from top to bottom,  $\beta_j(x)$ is this entry and $\gamma_j(x)$ is the row index 
associated with $\beta_j(x)$. If the $j$-th non-zero entry does not exist in this column, 
$\beta_j(x)=0$ and $\gamma_j(x)=0^n$. The sparse operation $U$ is ECS if, for any $x \in \{0,1\}^n$ and $1 \leq j \leq s(n)$, 
$\beta_j(x)$ and $\gamma_j(x)$ are classically computable in polynomial time. In 
particular, an ECS operation with $s(n)=O(1)$ is called ECS$_1$. Moreover, an ECS operation with $s(n)=1$ is 
called efficiently computable basis-preserving.

An efficiently computable basis-preserving operation preserves the class of CT states~\cite{van2}:
\begin{theorem}[\cite{van2}]\label{van2-1}
Let $|\varphi\rangle$ be an arbitrary CT state on $n$ qubits and $U$ be an arbitrary efficiently computable 
basis-preserving operation on $n$ qubits. Then, $U|\varphi\rangle$ is CT.
\end{theorem}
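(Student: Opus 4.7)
The plan is to unpack the two definitions and show that both computability requirements for $U|\varphi\rangle$ to be CT can be discharged by composing the efficient procedures for $|\varphi\rangle$ with those for $U$, with the key enabler being the Hermiticity of $U$.

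First I would record the structural consequence of the hypotheses on $U$. Since $U$ is both unitary and Hermitian, $U^2 = UU^\dagger = I$, so $U$ is an involution. Being efficiently computable basis-preserving (i.e.\ ECS with $s(n)=1$), for every $x\in\{0,1\}^n$ there exist $\beta(x)\in{\mathbb C}$ and $\gamma(x)\in\{0,1\}^n$, both computable in polynomial time, such that $U|x\rangle=\beta(x)|\gamma(x)\rangle$. Unitarity of $U$ forces $|\beta(x)|=1$ and forces $\gamma$ to be a permutation of $\{0,1\}^n$; the involution property $U^2=I$ then forces $\gamma$ to be its own inverse, i.e.\ $\gamma(\gamma(y))=y$ for every $y$ (and $\beta(\gamma(y))\beta(y)=1$). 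In particular, $\gamma^{-1}$ is also computable in polynomial time, which is the technical point that makes the rest of the argument go through.

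Next I would verify the amplitude computability. Writing $|\varphi\rangle=\sum_{x}\langle x|\varphi\rangle\,|x\rangle$, we get
$$U|\varphi\rangle=\sum_{x\in\{0,1\}^n}\langle x|\varphi\rangle\,\beta(x)\,|\gamma(x)\rangle,$$
so, because $\gamma$ is a bijection, $\langle y|U|\varphi\rangle=\beta(\gamma(y))\,\langle \gamma(y)|\varphi\rangle$ for every $y\in\{0,1\}^n$. The right-hand side is a product of two quantities each computable in polynomial time (the amplitude by CT-ness of $|\varphi\rangle$, the scalar $\beta(\gamma(y))$ by efficient computability of $\beta$ and $\gamma$), so it is itself polynomial-time computable.

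Finally I would verify sampling. Taking moduli squared and using $|\beta|=1$, the output probability distribution is $q(y)=|\langle y|U|\varphi\rangle|^2 = p(\gamma(y))$, where $p(x)=|\langle x|\varphi\rangle|^2$ is the distribution associated with $|\varphi\rangle$. To sample $q$, draw $x$ from $p$ (possible in polynomial time by CT-ness of $|\varphi\rangle$) and output $\gamma(x)$; then the output $Y=\gamma(X)$ satisfies $\Pr[Y=y]=\Pr[X=\gamma(y)]=p(\gamma(y))=q(y)$ since $\gamma$ is an involution. This gives both CT requirements, completing the proof. The only mildly nontrivial step is the appeal to Hermiticity to guarantee that $\gamma^{-1}$ is efficiently available; without it, basis-preservation alone would not deliver a polynomial-time way to evaluate $\langle y|U|\varphi\rangle$ at an arbitrary index $y$.
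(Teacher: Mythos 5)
The paper does not prove this statement; it is imported verbatim from~\cite{van2}, so there is no in-paper proof to compare against. Your argument is correct and self-contained: unitarity plus single-column sparsity gives $|\beta(x)|=1$ and $\gamma$ a permutation, Hermiticity gives $U^2=I$ and hence $\gamma^{-1}=\gamma$ computable in polynomial time, and both CT conditions for $U|\varphi\rangle$ then follow by composition. One small simplification worth noting: Hermiticity lets you bypass the inverse-permutation discussion entirely, since $\langle y|U|\varphi\rangle=\overline{\langle\varphi|U|y\rangle}=\overline{\beta(y)}\,\langle\gamma(y)|\varphi\rangle$ directly; this agrees with your expression because $\beta(\gamma(y))=1/\beta(y)=\overline{\beta(y)}$.
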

\noindent
The following theorem is a rephrased version of the one in~\cite{van2}:
\begin{theorem}[\cite{van2}]\label{van2-2}
Let $U$ be an arbitrary quantum operation on $n$ qubits such that $U|0^n\rangle$ is CT, and $O$ be an arbitrary 
observable with $||O|| \leq 1$, where $||O||$ is the absolute value of the largest eigenvalue 
of $O$. Let $V$ be an arbitrary quantum operation on $n$ qubits such that $V^\dag OV$ is ECS, and $f$ be an arbitrary 
polynomial in $n$. Then, there exists a polynomial-time randomized algorithm which outputs a real number $r$ such that 
$${\rm Pr}\left[\left|\langle 0^n|U^\dag V^\dag O VU|0^n\rangle -r\right| \leq \frac{1}{f(n)}\right] \geq 1-\frac{1}{\exp(n)}.$$
\end{theorem}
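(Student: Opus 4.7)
The plan is to construct a polynomial-time Monte Carlo estimator for $A := \langle 0^n|U^\dag V^\dag O V U|0^n\rangle$. Let $W := V^\dag O V$, which is unitary and Hermitian by the ECS hypothesis, and write $|\psi\rangle := U|0^n\rangle = \sum_x \psi_x |x\rangle$, which is CT. Let $s(n)$ be the sparsity polynomial for $W$, so that $W|y\rangle = \sum_{j=1}^{s(n)} \beta_j(y) |\gamma_j(y)\rangle$. The starting identity, obtained by expanding $A = \langle\psi|W|\psi\rangle$ in the computational basis, is
\[
A \;=\; \sum_{x,y}\overline{\psi_x}\,W_{xy}\,\psi_y \;=\; \sum_y \psi_y\, c_y,\qquad c_y := \sum_{j=1}^{s(n)}\beta_j(y)\,\overline{\psi_{\gamma_j(y)}},
\]
which naturally suggests importance sampling against $p(y) := |\psi_y|^2$.

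First I would draw $y\sim p$, which is polynomial-time because $|\psi\rangle$ is CT, and evaluate
\[
F(y) := \frac{c_y}{\overline{\psi_y}} \qquad (\text{and } F(y):=0 \text{ whenever } \psi_y=0).
\]
Each evaluation is polynomial time: the CT property supplies $\psi_y$ and every $\psi_{\gamma_j(y)}$, and the ECS property supplies the $s(n)$ pairs $(\beta_j(y),\gamma_j(y))$. Since $|\psi_y|^2/\overline{\psi_y} = \psi_y$, a direct calculation gives $\mathbb{E}_{y\sim p}[F(y)] = \sum_y \psi_y c_y = A$, with contributions from $\psi_y=0$ vanishing on both sides.

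The key step is the second-moment bound. Using unitarity of $W$,
\[
\mathbb{E}_{y\sim p}\bigl[|F(y)|^2\bigr] \;=\; \sum_y |c_y|^2 \;=\; \|\langle\psi|W\|_2^{\,2} \;=\; \langle\psi|WW^\dag|\psi\rangle \;=\; 1,
\]
so the variance of $F(y)$ is at most $1$. Because $W$ is Hermitian, $A$ is real, and $\mathrm{Re}\,F(y)$ is an unbiased real estimator of $A$ with variance at most $1$. Averaging $N=O(f(n)^2)$ i.i.d.\ evaluations and applying Chebyshev's inequality delivers a $1/f(n)$-accurate estimate with constant success probability, and a median-of-means amplification over $O(n)$ independent trials then boosts the success probability to $1-1/\exp(n)$ by a standard Chernoff bound on the number of successful runs.

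The main obstacle I expect is that $F(y)$ has no uniform pointwise bound: when $|\psi_y|$ is tiny the estimator can blow up, so Hoeffding-style concentration is unavailable. The crucial observation is that unitarity of $W$ (a consequence of its being ECS) controls the \emph{second moment} of $F$ rather than its $L^\infty$ norm, and the median-of-means trick is precisely what converts this second-moment control into the exponentially small failure probability demanded by the theorem.
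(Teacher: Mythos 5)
Your argument is correct and is essentially the intended proof: the paper itself offers no proof of Theorem~\ref{van2-2} (it is imported directly from~\cite{van2}), and your importance-sampling estimator drawn from the CT distribution $|\psi_y|^2$, with the second-moment bound $\sum_y |c_y|^2 = \langle\psi|WW^\dagger|\psi\rangle \le 1$ followed by Chebyshev and median-of-means amplification, is precisely Nest's Monte Carlo argument. The one point worth noting is that your appeal to unitarity of $W=V^\dagger OV$ is licensed because the paper's definition of a sparse (hence ECS) operation already requires it to be unitary and Hermitian; even without that, $\|O\|\le 1$ yields $\langle\psi|WW^\dagger|\psi\rangle\le 1$ directly, so the second-moment control survives.
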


\section{Target Quantum Circuits and Associated Fourier Coefficients}\label{parallel}

\subsection{CT-ECS Circuits}\label{parallel1}

We focus on a new class of quantum circuits defined by CT states and ECS operations:
\begin{definition}
A quantum circuit $C$ on $n$ qubits initialized to $|0^n\rangle$ is CT-ECS if $C$ consists of two blocks $C=VU$ such 
that $U$ and $V$ are polynomial-size quantum circuits, $U|0^n\rangle$ is CT, and $V^\dag Z_j V$ is ECS for 
any $1\leq j \leq n$, where $Z_j$ is a Pauli-$Z$ operation on the $j$-th qubit.
\end{definition}
\noindent
To provide examples of CT-ECS circuits, we define IQP, Clifford Magic, conjugated Clifford, 
and constant-depth quantum circuits on $n$ qubits as follows:
\begin{itemize}
\item An IQP circuit is of the form $H^{\otimes n}DH^{\otimes n}$, where $D$ is a 
polynomial-size quantum circuit consisting of $Z$, $CZ$, and $CCZ$ gates~\cite{Bremner2}.

\item A Clifford Magic circuit is of the form  $ET^{\otimes n}H^{\otimes n}$, where $E$ is a 
polynomial-size Clifford circuit, which consists of $H$, $S$, and $CZ$ gates~\cite{Yoganathan}.

\item A conjugated Clifford circuit is of the form 
$R_x(-\theta)^{\otimes n} R_z(-\phi)^{\otimes n} E R_z(\phi)^{\otimes n}R_x(\theta)^{\otimes n}$ 
for arbitrary real numbers $\phi,\theta$, where $E$ is a polynomial-size Clifford circuit~\cite{Bouland}.

\item A constant-depth quantum circuit is a polynomial-size quantum circuit $F$ whose depth is 
constant~\cite{Terhal,Bremner,Juan}.
\end{itemize}
In the common definition of an IQP circuit, $D$ consists of more general $Z$-diagonal gates. However, for simplicity, we adopt 
the above definition. The resulting class includes a quantum circuit of the form $H^{\otimes n}DH^{\otimes n}$ 
such that its output probability distribution is anti-concentrated and not classically simulatable (under plausible assumptions).

We show that the above circuits are CT-ECS:
\begin{lemma}\label{four}
Let $C$ be one of the following quantum circuits on $n$ qubits: an IQP, a Clifford 
Magic, a conjugated Clifford, or a constant-depth quantum circuit. 
Then, $C$ is CT-ECS.
\end{lemma}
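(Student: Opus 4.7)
The plan is, for each of the four listed families, to exhibit a concrete factorization $C = VU$ in which $U|0^n\rangle$ is a product state (hence CT, since the sampling distribution factorizes and each amplitude $\langle x|U|0^n\rangle$ is a product of single-qubit amplitudes), and then to show that each $V^\dagger Z_j V$ is ECS by describing explicitly the (at most $s(n)$) nonzero entries in every column. I will lean on two standard tools: first, any $n$-qubit Pauli operator is basis-preserving with a polynomial-time computable output index and phase; second, conjugation of a Pauli by a polynomial-size Clifford circuit can be tracked in polynomial time via the Heisenberg-picture (stabilizer) update.

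For an IQP circuit $C = H^{\otimes n} D H^{\otimes n}$, set $U = H^{\otimes n}$ (so $U|0^n\rangle = |+\rangle^{\otimes n}$ is CT) and $V = H^{\otimes n} D$. Because $D$ consists of $Z$-diagonal gates, $D|x\rangle = e^{i\phi(x)}|x\rangle$ for a phase $\phi(x)$ that can be read off gate-by-gate in polynomial time, so
\[
V^\dagger Z_j V |x\rangle = D^\dagger X_j D |x\rangle = e^{i(\phi(x) - \phi(x \oplus e_j))}|x \oplus e_j\rangle,
\]
giving one nonzero column entry with computable location and value; hence $V^\dagger Z_j V$ is efficiently computable basis-preserving, and so ECS. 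For a Clifford Magic circuit $C = E T^{\otimes n} H^{\otimes n}$, take $U = T^{\otimes n} H^{\otimes n}$ (a product state, hence CT) and $V = E$; then $V^\dagger Z_j V = E^\dagger Z_j E$ is an $n$-qubit Pauli (up to sign) computable from $E$ in polynomial time by the stabilizer update, and any Pauli is efficiently computable basis-preserving.

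For a conjugated Clifford circuit $C = R_x(-\theta)^{\otimes n} R_z(-\phi)^{\otimes n} E R_z(\phi)^{\otimes n} R_x(\theta)^{\otimes n}$, take $U = R_z(\phi)^{\otimes n} R_x(\theta)^{\otimes n}$ and $V = R_x(-\theta)^{\otimes n} R_z(-\phi)^{\otimes n} E$. Then $V^\dagger Z_j V = E^\dagger W_j E$, where $W_j$ acts as the one-qubit Hermitian operator $R_z(\phi) R_x(\theta) Z R_x(-\theta) R_z(-\phi)$ on qubit $j$ and as identity elsewhere. Expanding $W_j$ in the single-qubit Pauli basis yields $W_j = b X_j + c Y_j + d Z_j$ for real coefficients $b,c,d$ computable from $\phi,\theta$, so $V^\dagger Z_j V = b E^\dagger X_j E + c E^\dagger Y_j E + d E^\dagger Z_j E$ is a sum of three Clifford-conjugated single-qubit Paulis; each summand is basis-preserving with a polynomial-time computable image, so the column of $|x\rangle$ has at most three nonzero entries and $V^\dagger Z_j V$ is ECS (indeed ECS$_1$). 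For a constant-depth circuit $F$, take $U = I$ and $V = F$: the operator $F^\dagger Z_j F$ acts as identity outside the backward light cone of qubit $j$, whose size is at most $2^d = O(1)$ since $d$ is constant, so direct matrix expansion on those qubits gives at most $O(1)$ nonzero entries per column, each computable in polynomial time from the classical description of $F$, showing $V^\dagger Z_j V$ is ECS.

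The main obstacle is the bookkeeping in the conjugated Clifford case: one must verify that conjugation of $Z$ by the single-qubit layer $R_z(\phi) R_x(\theta)$ collapses to a linear combination of exactly three Paulis (so that sparsity does not blow up), and that aggregating $b E^\dagger X_j E + c E^\dagger Y_j E + d E^\dagger Z_j E$ keeps at most three computational basis states in each column, with coefficients that can be summed even when two of the three Paulis happen to map $|x\rangle$ to the same index. Once the stabilizer update supplies each $E^\dagger P_j E$ explicitly as a single Pauli, this aggregation is routine, and the remaining families are direct unpackings of the definitions in Section~\ref{idea}.
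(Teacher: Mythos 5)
Your proposal is correct and follows essentially the same route as the paper: the same $C=VU$ factorizations for the Clifford Magic, conjugated Clifford, and constant-depth cases, with the same Pauli-expansion and light-cone arguments (the paper writes the conjugated single-qubit operator as $\alpha_1 I+\alpha_2 Z+\alpha_3 X+\alpha_4 Y$ without noting that tracelessness forces $\alpha_1=0$, so your three-term version is a harmless sharpening). The only deviation is in the IQP case, where you place $D$ into $V$ (making $U|0^n\rangle=|+\rangle^{\otimes n}$ trivially CT and $V^\dag Z_jV=D^\dag X_jD$ basis-preserving via the computable phases), whereas the paper places $D$ into $U$ (invoking Theorem~\ref{van2-1} to get that $DH^{\otimes n}|0^n\rangle$ is CT and obtaining $V^\dag Z_jV=X_j$ directly); both verifications are valid.
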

\begin{proof}
When $C$ is an IQP circuit, $C=H^{\otimes n}DH^{\otimes n}$, where $D$ is 
a polynomial-size quantum circuit consisting of $Z$, $CZ$, and $CCZ$ gates. We consider $U=DH^{\otimes n}$ 
and $V=H^{\otimes n}$. Since $H^{\otimes n}|0^n\rangle$ is 
a product state and $D$ is an efficiently computable basis-preserving operation, by Theorem~\ref{van2-1}, 
$U|0^n\rangle=DH^{\otimes n}|0^n\rangle$ is CT. Moreover, $V^\dag Z_jV=X_j$, which is obviously 
ECS (in fact, efficiently computable basis-preserving), where $X_j$ is a Pauli-$X$ operation on the $j$-th qubit. 
Thus, $C$ is CT-ECS.

When $C$ is a Clifford Magic circuit, $C=ET^{\otimes n}H^{\otimes n}$, where $E$ is a polynomial-size Clifford circuit. 
We consider $U=T^{\otimes n}H^{\otimes n}$ and $V=E$. Since $U|0^n\rangle=T^{\otimes n}H^{\otimes n}|0^n\rangle$ is 
a product state, it is CT. Moreover, since $E$ is a Clifford circuit, $V^\dag Z_jV=E^\dag Z_jE$ is a Pauli 
operation on $n$ qubits, which is obviously ECS (in fact, efficiently computable basis-preserving). Thus, $C$ is CT-ECS.

When $C$ is a conjugated Clifford circuit, 
$C=R_x(-\theta)^{\otimes n} R_z(-\phi)^{\otimes n} E R_z(\phi)^{\otimes n}R_x(\theta)^{\otimes n}$ 
for arbitrary real numbers $\phi,\theta$, where $E$ is a polynomial-size Clifford circuit. We 
consider $U=R_z(\phi)^{\otimes n}R_x(\theta)^{\otimes n}$ and 
$V=R_x(-\theta)^{\otimes n} R_z(-\phi)^{\otimes n} E$. 
Since $U|0^n\rangle=R_z(\phi)^{\otimes n}R_x(\theta)^{\otimes n}|0^n\rangle$ is a product state, it is CT. Moreover, 
$$V^\dag Z_jV= E^\dag R_z(\phi)_jR_x(\theta)_j Z_j R_x(-\theta)_j R_z(-\phi)_j E.$$
The coefficients $\alpha_1,\alpha_2,\alpha_3,\alpha_4\in {\mathbb C}$ satisfying the following relation are classically computable 
in constant time:
$$\alpha_1 I+\alpha_2 Z+ \alpha_3 X +\alpha_4 Y=R_z(\phi)_jR_x(\theta)_j Z_j R_x(-\theta)_j R_z(-\phi)_j.$$
This implies that $V^\dag Z_jV = \alpha_1 I+\alpha_2  E^\dag Z E+ \alpha_3  E^\dag X E+\alpha_4  E^\dag Y E$. 
Since $E$ is a Clifford circuit, the operations $E^\dag Z E$, $E^\dag X E$, and $E^\dag Y E$ are Pauli operations 
on $n$ qubits, which implies that $V^\dag Z_jV$ is ECS. Thus, $C$ is CT-ECS.

When $C$ is a constant-depth quantum circuit, we consider $U=I$ and $V=C$. Of course, 
$U|0^n\rangle=|0^n\rangle$ is CT. Moreover, since each elementary gate in this paper acts only on a constant number of qubits, 
$V^\dag Z_jV$ also acts only on a constant number of qubits, 
which implies that $V^\dag Z_jV$ is ECS. Thus, $C$ is CT-ECS.
\end{proof}
\noindent
Lemma~\ref{four} implies that there exist various CT-ECS circuits such that their output probability distributions are 
anti-concentrated and not classically simulatable (under plausible assumptions), 
although we do not know whether the output probability distributions of constant-depth quantum circuits are anti-concentrated.

The above proof implies that, for any quantum circuit $C$ in Lemma~\ref{four}, the associated ECS operation $V^\dag Z_jV$ 
satisfies the condition where, for any $x\in\{0,1\}^n$, $V^\dag Z_j V|x\rangle$ can be represented as a linear combination of at 
most $O(1)$ computational basis states. In other words, $V^\dag Z_jV$ is ECS$_1$. This defines a class of CT-ECS circuits, 
whose elements we call CT-ECS$_1$ circuits. In Section~\ref{applications}, we consider the classical 
simulatability of CT-ECS$_1$ circuits on $n$ qubits in the noise-free setting when only $O(\log n)$ qubits are measured.

\subsection{Approximating the Associated Fourier Coefficients}\label{coefficient}

We provide a general relation between a quantum circuit and the Fourier coefficients of its output probability 
distribution:
\begin{lemma}\label{fourier-rep}
Let $C$ be an arbitrary quantum circuit on $n$ qubits initialized to $|0^n\rangle$ and $p$ be its output 
probability distribution over $\{0,1\}^n$. Then, 
$$\widehat{p}(s)=\frac{1}{2^n}\langle 0^n|C^\dag Z^s C|0^n\rangle$$
for any $s=s_1\cdots s_n \in\{0,1\}^n$, 
where $Z^s=\bigotimes_{j=1}^n Z_j^{s_j}$, i.e., the tensor product of a Pauli-$Z$ operation on 
the $j$-th qubit with $s_j=1$ for any $1 \leq j \leq n$.
\end{lemma}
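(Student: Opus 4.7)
The plan is to unfold the definition of the Fourier coefficient and then compress the sum using two elementary facts about the Pauli-$Z$ operator acting on computational basis states, namely that $Z^s\lvert x\rangle = (-1)^{s\cdot x}\lvert x\rangle$ for every $s,x\in\{0,1\}^n$, and the resolution of the identity $\sum_{x\in\{0,1\}^n}\lvert x\rangle\langle x\rvert = I$.

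First, I would start from the formula recalled in Section~\ref{fourier},
$$\widehat{p}(s) = \frac{1}{2^n}\sum_{x\in\{0,1\}^n} p(x)\,(-1)^{s\cdot x},$$
and substitute the definition $p(x)=|\langle x|C|0^n\rangle|^2 = \langle 0^n|C^\dag|x\rangle\langle x|C|0^n\rangle$. Next, I would reinterpret the sign as the action of $Z^s$ on a computational basis state: since $\langle x|Z^s = (-1)^{s\cdot x}\langle x|$, the product $(-1)^{s\cdot x}\langle 0^n|C^\dag|x\rangle$ equals $\langle 0^n|C^\dag Z^s|x\rangle$. Substituting this and pulling the sum over $x$ inside, the expression becomes
$$\widehat{p}(s) = \frac{1}{2^n}\langle 0^n|C^\dag Z^s\Bigl(\sum_{x\in\{0,1\}^n}|x\rangle\langle x|\Bigr) C|0^n\rangle.$$
Finally, the resolution of the identity collapses the bracketed sum to $I$, yielding the claimed identity.

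There is essentially no substantive obstacle here; the only care needed is the bookkeeping of whether $Z^s$ is inserted to the left or right of $|x\rangle\langle x|$, which is immaterial because $Z^s$ is diagonal in the computational basis and commutes with every $|x\rangle\langle x|$. The lemma's value lies not in the proof itself but in the clean operator-theoretic form it gives to $\widehat{p}(s)$, which sets the stage for combining with Theorem~\ref{van2-2}: taking $O = Z^s$ (which satisfies $\|O\|=1$) reduces the approximation of low-degree Fourier coefficients for a CT-ECS circuit $C=VU$ to estimating an inner product of the form $\langle 0^n|U^\dag V^\dag Z^s VU|0^n\rangle$, provided $V^\dag Z^s V$ is ECS for the relevant low-weight $s$.
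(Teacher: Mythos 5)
Your proof is correct, and it takes a genuinely shorter route than the paper's. Both arguments start from $\widehat{p}(s)=\frac{1}{2^n}\sum_x p(x)(-1)^{s\cdot x}$ and substitute $p(x)=\langle 0^n|C^\dag|x\rangle\langle x|C|0^n\rangle$, but they diverge in how they collapse the sum over $x$. You observe that $(-1)^{s\cdot x}|x\rangle\langle x| = Z^s|x\rangle\langle x|$ and then invoke the resolution of the identity $\sum_x |x\rangle\langle x| = I$ --- equivalently, you are using the spectral decomposition $Z^s=\sum_x(-1)^{s\cdot x}|x\rangle\langle x|$ in one step. The paper instead rewrites $|x\rangle\langle x|$ as $X^x|0^n\rangle\langle 0^n|X^x$ with $X^x=H^{\otimes n}Z^xH^{\otimes n}$, expands the projector as $|0^n\rangle\langle 0^n|=\frac{1}{2^n}\sum_t Z^t$, and then kills the resulting double sum with the character orthogonality $\sum_x(-1)^{(s+t)\cdot x}=2^n$ iff $s=t$ (and $0$ otherwise). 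The two computations are equivalent, but yours is more economical and more transparent about \emph{why} the identity holds (it is just the statement that $(-1)^{s\cdot x}$ are the eigenvalues of $Z^s$ in the measurement basis); the paper's detour through $X^x$ and $Z^t$ does not appear to buy anything additional for the later applications. Your closing remark about how the lemma feeds into Theorem~\ref{van2-2} with $O=Z^s$ also matches how the paper uses it in Lemma~\ref{appro-fourier}.
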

\begin{proof}
We transform the representation of the Fourier coefficient described in Section~\ref{fourier} as follows:
\begin{align}\label{fourier1}
\widehat{p}(s) 
& = \frac{1}{2^n}\sum_{x \in\{0,1\}^n}p(x)(-1)^{s\cdot x} 
= \frac{1}{2^n}\sum_{x \in\{0,1\}^n}\langle 0^n|C^\dag |x\rangle\langle x| 
C|0^n\rangle(-1)^{s\cdot x}\nonumber
\\
& =  \frac{1}{2^n}\sum_{x \in\{0,1\}^n}\langle 0^n|C^\dag 
X^x|0^n\rangle\langle 0^n |X^x C|0^n\rangle(-1)^{s\cdot x},
\end{align}
where $X^x$ denotes $H^{\otimes n} Z^x H^{\otimes n}$ for any $x \in\{0,1\}^n$. Since it holds that 
$$|0^n\rangle\langle 0^n| =\frac{1}{2^n}\sum_{t \in\{0,1\}^n}Z^t,$$
the above representation (\ref{fourier1}) of $\widehat{p}(s)$ implies that
\begin{align*} 
\widehat{p}(s)
& = \frac{1}{2^{2n}}\sum_{x \in\{0,1\}^n}\sum_{t \in\{0,1\}^n}\langle 0^n|C^\dag X^xZ^tX^x C|0^n\rangle(-1)^{s\cdot x} \\
& = \frac{1}{2^{2n}}\sum_{t \in\{0,1\}^n}\langle 0^n|C^\dag Z^t C|0^n\rangle\sum_{x \in\{0,1\}^n}(-1)^{(s+t)\cdot x} 
= \frac{1}{2^n}\langle 0^n|C^\dag Z^s C|0^n\rangle,              
\end{align*}
where $s+t$ is the bit-wise addition of $s$ and $t$ modulo 2. This is the desired representation.
\end{proof}

Using Theorem~\ref{van2-2} and Lemma~\ref{fourier-rep}, we show that the low-degree Fourier coefficients 
of the output probability distribution of a CT-ECS circuit can be approximated classically in polynomial time:
\begin{lemma}\label{appro-fourier}
Let $C$ be an arbitrary CT-ECS circuit on $n$ qubits and $p$ be its output probability distribution over $\{0,1\}^n$. 
Let $f$ be an arbitrary polynomial in $n$ and $s$ be an arbitrary element of $\{0,1\}^n$ with $|s|=O(1)$. 
Then, there exists a polynomial-time randomized algorithm which outputs a real number $\widehat{p}'(s)$ such that
$${\rm Pr}\left[|\widehat{p}(s) - \widehat{p}'(s)| \leq \frac{1}{2^n f(n)}\right] \geq 1-\frac{1}{\exp(n)}.$$
\end{lemma}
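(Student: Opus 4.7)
The plan is to combine the Fourier-coefficient identity of Lemma~\ref{fourier-rep} with the classical approximation algorithm of Theorem~\ref{van2-2}. Writing $C=VU$ with $U|0^n\rangle$ CT and each $V^\dag Z_j V$ ECS, Lemma~\ref{fourier-rep} gives
$$\widehat{p}(s)=\frac{1}{2^n}\langle 0^n|C^\dag Z^s C|0^n\rangle=\frac{1}{2^n}\langle 0^n|U^\dag (V^\dag Z^s V) U|0^n\rangle.$$
So it suffices to approximate the inner product on the right to additive error $1/f(n)$ in polynomial time with failure probability at most $1/\exp(n)$, and then to divide the output by $2^n$ and call the result $\widehat{p}'(s)$.

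To invoke Theorem~\ref{van2-2} I would take the observable $O=Z^s$. Clearly $\|Z^s\|=1$, so the norm hypothesis is met. The substantive point is that $V^\dag Z^s V$ itself is ECS. Writing $s=s_1\cdots s_n$ and letting $j_1,\ldots,j_k$ be the indices with $s_{j_i}=1$ (so $k=|s|=O(1)$), one has the telescoping factorisation
$$V^\dag Z^s V=\bigl(V^\dag Z_{j_1} V\bigr)\bigl(V^\dag Z_{j_2}V\bigr)\cdots\bigl(V^\dag Z_{j_k}V\bigr),$$
a product of a constant number of factors each of which is ECS by the CT-ECS hypothesis on $C$ (and each factor is unitary and Hermitian, as required by the ECS definition; their product $V^\dag Z^s V$ is likewise unitary and Hermitian because the $Z_{j_i}$'s commute). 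The case $|s|=0$ reduces to $V^\dag I V=I$, which is trivially ECS.

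The only real step to verify is that the product of $O(1)$ ECS operations is ECS. If $A$ has sparsity $s_A(n)$ and $B$ has sparsity $s_B(n)$, then $AB|x\rangle$ is obtained by first writing $B|x\rangle=\sum_{i=1}^{s_B(n)}\beta_i^B(x)|\gamma_i^B(x)\rangle$ and then applying $A$ to each ket, yielding a linear combination of at most $s_A(n)\,s_B(n)$ computational basis states whose amplitudes and row-indices are polynomial-time computable from those of $A$ and $B$. Iterating $k=O(1)$ times keeps the sparsity polynomial in $n$ and the entries polynomial-time computable, so $V^\dag Z^s V$ is ECS. I expect this closure observation to be the main bookkeeping obstacle; it is routine but needs to be stated carefully because the sparsity bound blows up geometrically with $k$ and would fail for $k=\omega(1)$, which is precisely why the lemma is restricted to low-degree $s$.

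With these ingredients in place, Theorem~\ref{van2-2} applied to $U$, $V$, $O=Z^s$, and the polynomial $f$ produces in polynomial time a real number $r$ with $\Pr[|\langle 0^n|U^\dag V^\dag Z^s VU|0^n\rangle-r|\leq 1/f(n)]\geq 1-1/\exp(n)$. Setting $\widehat{p}'(s)=r/2^n$ yields the claimed bound $|\widehat{p}(s)-\widehat{p}'(s)|\leq 1/(2^n f(n))$ with the same success probability, completing the proof.
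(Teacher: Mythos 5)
Your proposal is correct and follows essentially the same route as the paper's proof: apply Lemma~\ref{fourier-rep} to reduce the Fourier coefficient to the inner product $\frac{1}{2^n}\langle 0^n|U^\dag V^\dag Z^s VU|0^n\rangle$, factor $V^\dag Z^s V$ as a product of $|s|=O(1)$ ECS operations which is again ECS, and invoke Theorem~\ref{van2-2} with $O=Z^s$ before rescaling by $2^n$. Your explicit justification of the ECS closure under constant-length products (and of Hermiticity via commutation of the $Z_{j_i}$'s) spells out what the paper dismisses as ``a simple calculation,'' but the argument is the same.
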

\begin{proof}
Since $C$ is CT-ECS, it can be represented as $C=VU$ such that $U|0^n\rangle$ is CT and 
$V^\dag Z_j V$ is ECS for any $1\leq j \leq n$. Let $s$ be an arbitrary element of $\{0,1\}^n$ with $|s|=O(1)$, and we 
assume that $s_{j_1} = \cdots =s_{j_{|s|}}=1$. In this case,
$$V^\dag Z^s V = \prod_{k=1}^{|s|}(V^\dag Z_{j_k} V).$$
Since $V^\dag Z_{j_k} V$ is ECS, $V^\dag Z^s V$ is the product of a constant number of ECS operations. 
A simple calculation shows that such a product is also ECS. Thus, $V^\dag Z^s V$ is ECS.

Since $U|0^n\rangle$ is CT and $||Z^s|| \leq 1$, by Theorem~\ref{van2-2}, there exists a polynomial-time randomized algorithm 
which outputs a real number $r(s)$ such that
$${\rm Pr}\left[\left|\langle 0^n|U^\dag V^\dag Z^s VU|0^n\rangle -r(s)\right| \leq \frac{1}{f(n)}\right] \geq 1-\frac{1}{\exp(n)}.$$
By Lemma~\ref{fourier-rep} with $C=VU$,
$$\widehat{p}(s)=\frac{1}{2^n}\langle 0^n|U^\dag V^\dag Z^s VU|0^n\rangle$$
and thus the desired relation holds by defining $\widehat{p}'(s) = r(s)/2^n$.
\end{proof}

For any probability distribution $p$ over $\{0,1\}^n$, it holds that
$$\widehat{p}(0^n)=\frac{1}{2^n}\sum_{x\in \{0,1\}^n} p(x)=\frac{1}{2^n}.$$ 
Thus, when we consider a classical algorithm for approximating $\widehat{p}(s)$, we only consider an 
algorithm that outputs $1/2^n$ when $s=0^n$. This slightly simplifies the analysis 
of the classical simulatability of CT-ECS circuits in the following sections.

\section{CT-ECS Circuits under Noise Model~{\bf A}}

In this section, we prove Theorem~\ref{basic}. Its precise statement is as follows:
\setcounter{theorem}{0}
\begin{theorem}
Let $C$ be an arbitrary CT-ECS circuit on $n$ qubits such that its output probability 
distribution $p$ over $\{0,1\}^n$ is anti-concentrated, i.e., $\sum_{x\in \{0,1\}^n} p(x)^2 \leq \alpha/2^n$ 
for some known constant $\alpha \geq 1$. Let $0 < \delta < 1$ be an arbitrary constant. We assume that
\begin{itemize}
\item a depolarizing channel with (possibly unknown) constant rate $0 < \varepsilon < 1$ is applied to each 
qubit after performing $C$, which yields the probability distribution $\widetilde{p}_{\rm A}$ over $\{0,1\}^n$, and
\item it is possible to choose a constant $\lambda$ such that 
$$1 \leq \frac{\varepsilon}{\lambda} \leq 1+\frac{1}{\frac{10\sqrt{\alpha}}{\delta}\log\frac{10\sqrt{\alpha}}{\delta}}.$$ 
\end{itemize}
Then, there exists a polynomial-time randomized algorithm which outputs (a classical 
description of) a probability distribution $\widetilde{q}_{\rm A}$ over $\{0,1\}^n$ such that 
$${\rm Pr}\left[||\widetilde{p}_{\rm A}-\widetilde{q}_{\rm A}||_1 \leq \delta\right] \geq 1-\frac{1}{\exp(n)}$$ 
and $\widetilde{q}_{\rm A}$ is classically samplable in polynomial time.
\end{theorem}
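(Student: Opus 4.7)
The plan is to construct a \emph{quasi-distribution} $q$ on $\{0,1\}^n$ by truncating the Fourier expansion of $\widetilde{p}_{\rm A}$ given in Section~\ref{fourier} to degrees at most $k$, and replacing the two unknown ingredients---the exact noise rate $\varepsilon$ and the exact Fourier coefficients $\widehat{p}(s)$---by the known surrogate $\lambda$ and by estimates $\widehat{p}'(s)$ obtained from Lemma~\ref{appro-fourier}:
\[
q(x) \;=\; \sum_{s\,:\,|s|\le k}(1-\lambda)^{|s|}\,\widehat{p}'(s)\,(-1)^{s\cdot x}.
\]
I will take $k=\bigl\lceil\lambda^{-1}\log(10\sqrt{\alpha}/\delta)\bigr\rceil=O(1)$, so that the set of relevant $s$ has polynomial size $O(n^k)$. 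All the $\widehat{p}'(s)$ can therefore be produced in polynomial time by invoking Lemma~\ref{appro-fourier} once for each such $s$ with a sufficiently large polynomial $f$ (e.g. $f(n)=n^{k+1}$), followed by a union bound that keeps the total failure probability at most $1/\exp(n)$.

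To bound $\|\widetilde{p}_{\rm A}-q\|_1$, I will insert intermediate Fourier series and split the error by the triangle inequality into three pieces: a \emph{tail} piece over $|s|>k$, a \emph{noise-rate} piece carrying $(1-\varepsilon)^{|s|}-(1-\lambda)^{|s|}$, and an \emph{estimation} piece carrying $\widehat{p}(s)-\widehat{p}'(s)$. The tail and noise-rate pieces are both handled by the norm inequality $\|f\|_1\le 2^n\|\widehat{f}\|_2$ from Section~\ref{fourier} together with the Parseval-form of the anti-concentration hypothesis, $\sum_s\widehat{p}(s)^2\le\alpha/2^{2n}$. The tail piece is bounded by $\sqrt{\alpha}(1-\lambda)^k\le\delta/10$ using the choice of $k$ and $(1-\lambda)^{1/\lambda}\le 1/e$. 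The estimation piece has $l_1$ norm at most $2^n\cdot O(n^k)\cdot 1/(2^n f(n))$, which is $\le\delta/10$ for $f$ large enough.

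The main obstacle is the noise-rate piece, which is where the exact shape of the hypothesis $\varepsilon/\lambda\le 1+1/K$ with $K=(10\sqrt{\alpha}/\delta)\log(10\sqrt{\alpha}/\delta)$ is essential. A telescoping identity $(1-\lambda)^m-(1-\varepsilon)^m=(\varepsilon-\lambda)\sum_{i=0}^{m-1}(1-\lambda)^i(1-\varepsilon)^{m-1-i}$ yields $|(1-\lambda)^m-(1-\varepsilon)^m|\le m(\varepsilon-\lambda)$ for $m\le k$, and the hypothesis gives $\varepsilon-\lambda\le\lambda/K$. Applying $\|f\|_1\le 2^n\|\widehat{f}\|_2$ and Parseval once more bounds the noise-rate piece by $\sqrt{\alpha}\cdot k(\varepsilon-\lambda)\le\sqrt{\alpha}\,k\lambda/K=O(\sqrt{\alpha}\log(\sqrt{\alpha}/\delta)/K)$, which equals $O(\delta)$ by the definition of $K$ and can be made $\le\delta/10$ after tuning constants. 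Summing gives $\|\widetilde{p}_{\rm A}-q\|_1\le 3\delta/10$ with probability $1-1/\exp(n)$.

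Finally, I will convert $q$ into a probability distribution $\widetilde{q}_{\rm A}$ that is classically sampleable in polynomial time, following the approach of~\cite{Bremner3}. Since $q$'s Fourier support has size $O(n^k)$, any marginal $\sum_{x_{j+1},\dots,x_n}q(x_1\cdots x_n)$ collapses via $\sum_{y\in\{0,1\}^{n-j}}(-1)^{t\cdot y}=2^{n-j}[t=0]$ to a polynomial-size sum over exactly the same $\widehat{p}'(s)$, hence is polynomial-time computable. Sampling then proceeds bit by bit via clipped and renormalized conditional probabilities derived from these marginals, and a standard error analysis (as in~\cite{Bremner3}) shows that the resulting $\widetilde{q}_{\rm A}$ satisfies $\|\widetilde{p}_{\rm A}-\widetilde{q}_{\rm A}\|_1\le O(\|\widetilde{p}_{\rm A}-q\|_1)\le\delta$, completing the proof.
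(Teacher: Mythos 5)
Your proposal is correct and follows essentially the same route as the paper: the same truncated Fourier surrogate $q$ with cutoff $c=\lceil\lambda^{-1}\log(10\sqrt{\alpha}/\delta)\rceil$, the same tail/noise-rate/estimation error split controlled via $\|f\|_1\le 2^n\|\widehat{f}\|_2$, Parseval, the anti-concentration bound, and the telescoping inequality $(1-\lambda)^m-(1-\varepsilon)^m\le m(\varepsilon-\lambda)$, followed by the Bremner et al.\ bit-by-bit sampling of ${\rm Alg}(q)$. The only differences are cosmetic (a three-way triangle inequality instead of the paper's squared two-term decomposition, and slightly looser constant bookkeeping that still lands within the $\delta/3$ budget needed before the final factor-of-3 sampling loss).
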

\noindent
First, we define a function over $\{0,1\}^n$ that is close to 
$\widetilde{p}_{\rm A}$, which we want to approximate. This is done by using the approximate value $\lambda$ of the 
noise rate $\varepsilon$ and the approximate values $\widehat{p}'(s)$ of the low-degree Fourier coefficients of 
$p$ obtained by Lemma~\ref{appro-fourier}. Then, we sample a probability distribution close to the function.

\subsection{Function Close to the Target Probability Distribution}

We show that the probability distribution $\widetilde{p}_{\rm A}$ can be approximated by a 
function whose Fourier coefficients can be obtained classically in polynomial time:

\begin{lemma}\label{coefficients}
Let $C$ be an arbitrary CT-ECS circuit on $n$ qubits such that its output probability distribution $p$ 
satisfies $\sum_{x\in \{0,1\}^n} p(x)^2 \leq \alpha/2^n$ 
for some known constant $\alpha \geq 1$. Let $0 < \delta < 1$ be an arbitrary constant. We assume that
\begin{itemize}
\item a depolarizing channel with constant rate $0 < \varepsilon < 1$ is applied to each qubit 
after performing $C$, which yields the probability distribution $\widetilde{p}_{\rm A}$, and
\item it is possible to choose a constant $\lambda$ such that 
$$1 \leq \frac{\varepsilon}{\lambda} \leq 1+ \frac{1}{\frac{10\sqrt{\alpha}}{\delta}\log\frac{10\sqrt{\alpha}}{\delta}}.$$ 
\end{itemize}
Then, there exists a polynomial-time randomized algorithm which outputs the Fourier 
coefficients of a function $q$ over $\{0,1\}^n$ such that 
$${\rm Pr}\left[ ||\widetilde{p}_{\rm A}-q||_1 \leq \frac{\delta}{3}\right] \geq 1-\frac{1}{\exp(n)}.$$
\end{lemma}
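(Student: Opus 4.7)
The plan is to define $q$ as a truncated Fourier series that mimics the representation of $\widetilde{p}_{\rm A}$, with two modifications: (i) the unknown $\varepsilon$ is replaced by the known approximation $\lambda$, and (ii) the exact Fourier coefficients $\widehat{p}(s)$ are replaced by approximations $\widehat{p}'(s)$ obtained from Lemma~\ref{appro-fourier}. Concretely, set a truncation degree
$d = \lceil \lambda^{-1} \log(K) \rceil$ with $K = 10\sqrt{\alpha}/\delta$, so that $d = O(1)$ and $\binom{n}{\leq d} = O(n^d)$ is polynomial in $n$. For every $s \in \{0,1\}^n$ with $|s| \leq d$ invoke Lemma~\ref{appro-fourier} with a sufficiently large polynomial $f$ to obtain $\widehat{p}'(s)$; for $|s| > d$ set $\widehat{p}'(s)=0$. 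Define
$q(x) = \sum_{|s|\leq d}(1-\lambda)^{|s|}\widehat{p}'(s)(-1)^{s\cdot x}$, whose Fourier coefficients are computable in polynomial time. Since Lemma~\ref{appro-fourier} fails with probability at most $\exp(-n)$ per coefficient and there are polynomially many, a union bound gives the claimed success probability.

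The next step is to bound $\|\widetilde{p}_{\rm A}-q\|_1$ by decomposing the difference into three pieces whose Fourier coefficients are
$(1-\varepsilon)^{|s|}[\widehat{p}(s)-\widehat{p}'(s)]$ for $|s|\leq d$ (call this $A$),
$[(1-\varepsilon)^{|s|}-(1-\lambda)^{|s|}]\widehat{p}'(s)$ for $|s|\leq d$ (call this $B$), and
$(1-\varepsilon)^{|s|}\widehat{p}(s)$ for $|s|>d$ (the tail $T$). Then I would apply the inequality $\|f\|_1\leq 2^n\|\widehat{f}\|_2$ from Section~\ref{fourier} to each piece and demand that each contributes at most $\delta/9$. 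For $A$, using $|\widehat{p}(s)-\widehat{p}'(s)|\leq 1/(2^n f(n))$ and $(1-\varepsilon)^{|s|}\leq 1$, the bound is $\|A\|_1\leq \sqrt{\binom{n}{\leq d}}/f(n)$, which is made arbitrarily small by choosing $f$ a sufficiently large polynomial. For $T$, Parseval together with the anti-concentration bound $\sum_s\widehat{p}(s)^2 = 2^{-n}\sum_x p(x)^2\leq \alpha/2^{2n}$ gives $\|T\|_1\leq (1-\varepsilon)^{d+1}\sqrt{\alpha}\leq (1-\lambda)^{d+1}\sqrt{\alpha}\leq e^{-\lambda d}\sqrt{\alpha}$, and the choice of $d$ makes this at most $\delta/9$.

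The main obstacle is controlling $B$, because this is where the looseness in the noise rate must be balanced against the truncation degree. The mean value theorem gives $0\leq (1-\lambda)^{|s|}-(1-\varepsilon)^{|s|}\leq |s|(\varepsilon-\lambda)\leq dc\lambda$ for $|s|\leq d$, where $c = 1/(K\log K)$ is the slack from the hypothesis on $\varepsilon/\lambda$. Combined with $\sum_{|s|\leq d}\widehat{p}'(s)^2 \leq 2\alpha/2^{2n}$ (using $\sum_s\widehat{p}(s)^2\leq\alpha/2^{2n}$ plus the negligible perturbation from $A$'s estimate), this yields
$\|B\|_1 \leq dc\lambda\sqrt{2\alpha}$. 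With $d\lambda \approx \log K$ and $c\log K = 1/K = \delta/(10\sqrt{\alpha})$, we obtain $\|B\|_1 = O(\delta)$; the constants $10$ and $\log$ in the hypothesis are precisely engineered to make this at most $\delta/9$. Verifying this constant balancing, and confirming that the three contributions sum to at most $\delta/3$, is the essential calculation. Combining the three bounds completes the proof.
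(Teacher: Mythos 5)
Your plan is essentially the paper's proof: the same truncation degree $\lceil\lambda^{-1}\log(10\sqrt{\alpha}/\delta)\rceil$, the same function $q$ built from the output of Lemma~\ref{appro-fourier}, and the same three error sources controlled via $\|f\|_1\leq 2^n\|\widehat{f}\|_2$, Parseval plus anti-concentration for the tail, and $(1-\lambda)^{|s|}-(1-\varepsilon)^{|s|}\leq|s|(\varepsilon-\lambda)$ for the rate mismatch (the paper just bounds the squared $\ell_2$ norm of the whole low-degree part at once instead of splitting your $A$ and $B$ by a further triangle inequality). One quantitative caveat: with your split the $B$ piece evaluates to roughly $(1+1/\log K)\sqrt{\alpha}/K\approx 0.13\delta$ (where $K=10\sqrt{\alpha}/\delta$), which exceeds your per-piece budget of $\delta/9$; but since $A$ is negligible and $T\leq\delta/10$, the three contributions still sum to well under $\delta/3$, so the conclusion stands.
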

\begin{proof}
As described in Section~\ref{fourier}, the probability distribution $\widetilde{p}_{\rm A}$ is represented as
$$\widetilde{p}_{\rm A}(x)=\sum_{s \in\{0,1\}^n}(1-\varepsilon)^{|s|}\widehat{p}(s)(-1)^{s\cdot x}.$$
Using the known constants $\alpha$, $\delta$, and $\lambda$, we fix an integer constant
$$c = \left\lceil\frac{1}{\lambda}\log \frac{10\sqrt{\alpha}}{\delta}\right\rceil.$$
Since $ 0 < \lambda < 1$ and $10\sqrt{\alpha}/\delta > 10$, $c >3$. The definition of $c$ implies that
$$\frac{1}{\lambda}\log \frac{10\sqrt{\alpha}}{\delta} \leq c \leq \frac{1}{\lambda}\log \frac{10\sqrt{\alpha}}{\delta} +1 
\leq \frac{2}{\lambda}\log \frac{10\sqrt{\alpha}}{\delta}.$$ 
Thus, $\alpha/2^{2\lambda c} \leq \delta^2/100$. Moreover, 
$0\leq c(\varepsilon -\lambda) \leq \frac{\delta}{5\sqrt{\alpha}}$ since
$$
\lambda \leq \varepsilon \leq \lambda +\frac{\lambda}{\frac{10\sqrt{\alpha}}{\delta}\log\frac{10\sqrt{\alpha}}{\delta}}
\leq \lambda + \frac{\lambda}{\frac{10\sqrt{\alpha}}{\delta}\cdot\frac{c\lambda}{2}} \leq \lambda + \frac{\delta}{5c\sqrt{\alpha}}.
$$

By Lemma~\ref{appro-fourier} with $f(n)=10(n^c+1)/\delta$ and an arbitrary $s\in\{0,1\}^n$ 
with $|s| \leq c$, there exists a polynomial-time randomized algorithm which outputs $\widehat{p}'(s)$ such that
$${\rm Pr}\left[|\widehat{p}(s) - \widehat{p}'(s)| \leq \frac{1}{2^n f(n)}\right] \geq 1-\frac{1}{\exp(n)}.$$
We compute $\widehat{p}'(s)$ for all $s \in \{0,1\}^n\setminus\{0^n\}$ with $|s| \leq c$, and define 
$\widehat{p}'(0^n)=1/2^n$ (as described at the end of Section~\ref{coefficient}) and $\widehat{p}'(s)=0$ 
for all $s \in \{0,1\}^n$ with $|s| > c$. 
Since $|\{s \in \{0,1\}^n ||s|\leq c\}|\leq n^c +1$, it takes polynomial time to compute all these values. Moreover, 
$${\rm Pr}\left[\forall s \in \{0,1\}^n \ {\rm with} \ |s| \leq c,\ |\widehat{p}(s) - \widehat{p}'(s)| \leq \frac{1}{2^n f(n)}\right] 
\geq 1-\frac{1}{\exp(n)}.$$
This can be simply shown by a direct application of the inequality $(1-a)^r \geq 1-ra$ for an arbitrary 
real number $0 \leq a \leq 1$ and integer $r \geq 1$.

We define a function $q$ over $\{0,1\}^n$ as
$$q(x) = \sum_{s\in\{0,1\}^n, |s| \leq c}(1-\lambda)^{|s|}\widehat{p}'(s)(-1)^{s\cdot x}.$$
In the following, we show that $||\widetilde{p}_{\rm A} - q||_1 \leq \delta/3$ 
under the assumption that, for all $s \in \{0,1\}^n$ with $|s| \leq c$, 
$|\widehat{p}(s) - \widehat{p}'(s)| \leq \frac{1}{2^n f(n)}$. 
A direct calculation with the relations described in Section~\ref{fourier} shows that
\begin{align}\label{difference}
\hspace{-.1cm}
||\widetilde{p}_{\rm A} - q||_1^2 & \leq 2^n||\widetilde{p}_{\rm A} - q||_2^2 = 2^{2n}||\widehat{\widetilde{p}_{\rm A}} - \widehat{q}||_2^2
\nonumber
\\
& =
2^{2n}\sum_{s: |s| > c}(1-\varepsilon)^{2|s|}\widehat{p}(s)^2
+
2^{2n}\sum_{s:|s| \leq c}
\left[(1-\varepsilon)^{|s|}\widehat{p}(s)- (1-\lambda)^{|s|}\widehat{p}'(s)  \right]^2.
\end{align}
Using the bounds $\sum_{x\in \{0,1\}^n} p(x)^2 \leq \alpha/2^n$ and $\alpha/2^{2\lambda c} \leq \delta^2/100$, 
we upper-bound the first term of (\ref{difference}) as follows:
\begin{align*}
2^{2n}\sum_{s: |s| > c}(1-\varepsilon)^{2|s|}\widehat{p}(s)^2
& \leq
2^{2n}(1-\lambda)^{2c}\sum_{s: |s| > c}\widehat{p}(s)^2
\leq
2^{2n}(1-\lambda)^{2c}\sum_{s\in \{0,1\}^n}\widehat{p}(s)^2 \\
& =
2^{n}(1-\lambda)^{2c}\sum_{x\in \{0,1\}^n}p(x)^2
\leq \frac{2^n}{e^{2\lambda c}}\cdot\frac{\alpha}{2^n} \leq \frac{\alpha}{2^{2\lambda c}} \leq \frac{\delta^2}{100}.
\end{align*}
Then, we upper-bound the second term of (\ref{difference}) as follows: 
\begin{align}\label{intermediate}
& 2^{2n}\sum_{s: |s| \leq c}
\left[(1-\varepsilon)^{|s|}\widehat{p}(s)- (1-\lambda)^{|s|}\widehat{p}'(s) \right]^2\nonumber\\
& =
2^{2n}\sum_{s: |s| \leq c}
\left\{
\left[(1-\varepsilon)^{|s|}-(1-\lambda)^{|s|}\right]\widehat{p}(s) 
+(1-\lambda)^{|s|}(\widehat{p}(s)-\widehat{p}'(s))\right\}^2\nonumber\\
& \leq
2^{2n}\sum_{s: |s| \leq c}
\left\{
\left[(1-\lambda)^{|s|}-(1-\varepsilon)^{|s|}\right]|\widehat{p}(s)| 
+(1-\lambda)^{|s|}|\widehat{p}(s)-\widehat{p}'(s)|\right\}^2\nonumber\\
& \leq
2^{2n}\sum_{s: |s| \leq c}\left(c(\varepsilon-\lambda)|\widehat{p}(s)| + \frac{1}{2^nf(n)}  \right)^2,
\end{align}
where the last inequality is due to the fact 
that $(1-\lambda)^{|s|}-(1-\varepsilon)^{|s|} \leq |s|(\varepsilon-\lambda) \leq c(\varepsilon-\lambda)$ and 
$|\widehat{p}(s) - \widehat{p}'(s)| \leq \frac{1}{2^n f(n)}$, where $f(n)=10(n^c+1)/\delta$.

Using the bounds $0\leq c(\varepsilon -\lambda) \leq \frac{\delta}{5\sqrt{\alpha}}$,
$|\widehat{p}(s)| \leq 1/2^n$, $|\{s \in \{0,1\}^n ||s|\leq c\}|\leq n^c +1$, and 
$\sum_{x\in \{0,1\}^n} p(x)^2 \leq \alpha/2^n$, we upper-bound the value~(\ref{intermediate}) as follows:
\begin{align*}
& 2^{2n}\sum_{s: |s| \leq c}
\left(c(\varepsilon-\lambda)|\widehat{p}(s)| + \frac{1}{2^nf(n)}  \right)^2 
\leq
2^{2n}\sum_{s: |s| \leq c}
\left(\frac{\delta|\widehat{p}(s)|}{5\sqrt{\alpha}} + \frac{1}{2^nf(n)}  \right)^2 \\
& =
2^{2n}\sum_{s: |s| \leq c}
\left(\frac{\delta^2\widehat{p}(s)^2}{25\alpha} + \frac{2\delta|\widehat{p}(s)|}{5\sqrt{\alpha}2^nf(n)}
+ \frac{1}{2^{2n}f(n)^2} \right)\\
& \leq
\frac{2^{n}\delta^2}{25\alpha}\sum_{x \in \{0,1\}^n}{p}(x)^2
+
\frac{2\delta (n^c+1)}{5\sqrt{\alpha}f(n)}
+
\frac{n^c+1}{f(n)^2}
\leq 
\frac{\delta^2}{25}  + \frac{\delta^2}{25} +\frac{\delta^2}{100}= \frac{9\delta^2}{100}.
\end{align*}
Combining the upper bounds of the above two terms implies that
$$||\widetilde{p}_{\rm A} - q||_1^2 \leq \frac{\delta^2}{100} + \frac{9\delta^2}{100}  \leq \frac{\delta^2}{9},$$
which is the desired bound.
\end{proof}

\subsection{Sampling a Probability Distribution Close to the Function}\label{sampling}

The remaining problem is to sample a probability distribution close to the function $q$ defined in the proof of Lemma~\ref{coefficients}. 
To do this, we use a classical sampling algorithm proposed by Bremner et al.~\cite{Bremner3}, although the following analysis is slightly 
simpler than the previous one. We represent $q$ as 
$$q(x) = \sum_{s\in\{0,1\}^n, |s| \leq c}\widehat{q}(s)(-1)^{s\cdot x},$$
where $\widehat{q}(s)=(1-\lambda)^{|s|}\widehat{p}'(s)$. It holds that 
$\sum_{x\in\{0,1\}^n}q(x)=2^n\widehat{q}(0^n)=2^n\widehat{p}'(0^n)=1$.

We define $S_\epsilon =1$ and
$$S_y=\sum_{x\in\{0,1\}^n,x_1\cdots x_k=y}q(x)$$
for any $1 \leq k \leq n$ and $y \in \{0,1\}^k$, where $\epsilon$ denotes the empty string. The sampling algorithm 
is described as follows:
\begin{enumerate}
\item Set $y \gets \epsilon$.

\item Perform the following procedure $n$ times:
\begin{enumerate}
\item If $S_{yz} <0$ for some $z \in \{0,1\}$, set $y \gets y\bar{z}$, where $\bar{z}=1-z$.

\item Otherwise, set $y \gets y0$ with probability $S_{y0}/S_y$ and $y \gets y1$ with probability $1-S_{y0}/S_y$.
\end{enumerate}

\item Output $y \in\{0,1\}^n$.
\end{enumerate}
A direct calculation shows that
$$S_y=2^{n-k}\sum_{s\in\{0,1\}^n, |s| \leq c,s_{k+1}\cdots s_n=0^{n-k}}\widehat{q}(s)(-1)^{s_1\cdots s_k \cdot y}$$
for any $1 \leq k \leq n$ and $y \in \{0,1\}^k$. 
Since $S_y$ is classically computable in polynomial time (when we have $\widehat{q}(s)$ for all 
$s\in\{0,1\}^n$ with $|s| \leq c$), the runtime of the sampling algorithm is also polynomial. This sampling algorithm 
defines a real-valued function, denoted as ${\rm Alg}(q)$, that maps $y \in\{0,1\}^n$ to the 
probability that the sampling algorithm outputs $y$.

To make the analysis of ${\rm Alg}(q)$ easier, Bremner et al.~\cite{Bremner3} defined another function, 
${\rm Fix}(q)=\left(\sum_{x\in\{0,1\}^n}q(x)\right){\rm Alg}(q)$, 
and showed that ${\rm Fix}(q)/\sum_{x\in\{0,1\}^n}q(x)$, which is ${\rm Alg}(q)$, is a probability distribution and
$$||q-{\rm Fix}(q)||_1 = 2\sum_{x\in\{0,1\}^n,q(x)<0} |q(x)|.$$
As described above, we can assume that $\sum_{x\in\{0,1\}^n}q(x)=1$, and thus can deal with the 
situation where ${\rm Fix}(q)={\rm Alg}(q)$. Therefore, Bremner et al.'s analysis of ${\rm Fix}(q)$ directly works for ${\rm Alg}(q)$. 
Moreover, to prove Theorem~\ref{basic}, it suffices to show the following lemma, which corresponds to a 
special case of the property of ${\rm Alg}(q)$ shown by Bremner et al.~\cite{Bremner3}:
\begin{lemma}\label{function}
We assume that there exists a constant $0 < \delta <1$ such that 
$||\widetilde{p}_{\rm A}-q||_1 \leq \delta/3$. Then, $||\widetilde{p}_{\rm A}-{\rm Alg}(q)||_1\leq \delta$.
\end{lemma}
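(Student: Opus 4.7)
The plan is to bound $||\widetilde{p}_{\rm A} - {\rm Alg}(q)||_1$ by splitting it via the triangle inequality into the distance from $\widetilde{p}_{\rm A}$ to $q$ (already bounded by $\delta/3$) and the distance from $q$ to ${\rm Alg}(q)$. The latter is exactly the quantity handled by the ${\rm Fix}(q)$ formalism of Bremner et al., so the task reduces to controlling the negative mass of $q$.

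First I would note that, as the text just observed, $\sum_{x}q(x)=1$, which implies ${\rm Fix}(q)={\rm Alg}(q)$. Consequently the cited identity becomes
\begin{equation*}
||q - {\rm Alg}(q)||_1 \;=\; 2\sum_{x\in\{0,1\}^n,\,q(x)<0}|q(x)|.
\end{equation*}
The key observation is that $\widetilde{p}_{\rm A}$, being a bona fide probability distribution, satisfies $\widetilde{p}_{\rm A}(x)\geq 0$ for every $x$; therefore whenever $q(x)<0$, we have $|\widetilde{p}_{\rm A}(x)-q(x)|=\widetilde{p}_{\rm A}(x)+|q(x)|\geq |q(x)|$. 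Summing this pointwise inequality over the set $\{x: q(x)<0\}$ gives
\begin{equation*}
\sum_{x:\,q(x)<0}|q(x)| \;\leq\; \sum_{x:\,q(x)<0}|\widetilde{p}_{\rm A}(x)-q(x)| \;\leq\; ||\widetilde{p}_{\rm A}-q||_1 \;\leq\; \frac{\delta}{3}.
\end{equation*}

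Plugging this back, $||q - {\rm Alg}(q)||_1 \leq 2\delta/3$. Then by the triangle inequality,
\begin{equation*}
||\widetilde{p}_{\rm A}-{\rm Alg}(q)||_1 \;\leq\; ||\widetilde{p}_{\rm A}-q||_1 + ||q-{\rm Alg}(q)||_1 \;\leq\; \frac{\delta}{3} + \frac{2\delta}{3} \;=\; \delta,
\end{equation*}
which is the desired bound. There is no real obstacle here; the only subtle point is recognizing that the negative part of $q$ is automatically controlled by its $l_1$ distance to any nonnegative function, which is precisely why the ``fix'' operation only roughly doubles the error. The heavy lifting has already been done in Lemma~\ref{coefficients} (approximating $\widetilde{p}_{\rm A}$ by $q$) and in establishing the $||q-{\rm Fix}(q)||_1$ identity; this lemma is the clean finish.
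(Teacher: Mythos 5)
Your proof is correct and follows essentially the same route as the paper: triangle inequality, the identity $||q-{\rm Alg}(q)||_1 = 2\sum_{x:q(x)<0}|q(x)|$ (valid since $\sum_x q(x)=1$ gives ${\rm Fix}(q)={\rm Alg}(q)$), and the bound $|q(x)|\leq|\widetilde{p}_{\rm A}(x)-q(x)|$ on the negative part using nonnegativity of $\widetilde{p}_{\rm A}$. You merely make explicit the nonnegativity argument that the paper leaves implicit; no substantive difference.
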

\begin{proof} We upper-bound the value $||\widetilde{p}_{\rm A}-{\rm Alg}(q)||_1$ as follows:
\begin{align*}
||\widetilde{p}_{\rm A}-{\rm Alg}(q)||_1 & \leq ||\widetilde{p}_{\rm A}-q||_1 + ||q-{\rm Alg}(q)||_1 
\leq \frac{\delta}{3} + 2\sum_{x\in\{0,1\}^n,q(x)<0} |q(x)|\\
& \leq
\frac{\delta}{3}+2\sum_{x\in\{0,1\}^n,q(x)<0} |\widetilde{p}_{\rm A}(x)-q(x)|\leq \frac{\delta}{3} 
+ 2||\widetilde{p}_{\rm A}-q||_1 \leq \delta,
\end{align*}
which is the desired upper bound.
\end{proof}

Combining the above lemmas immediately implies Theorem~\ref{basic}:
\begin{proof}[Proof of Theorem~\ref{basic}]
By Lemma~\ref{coefficients}, there exists a polynomial-time randomized algorithm which outputs the Fourier 
coefficients of a function $q$ over $\{0,1\}^n$ such that 
$${\rm Pr}\left[||\widetilde{p}_{\rm A}-q||_1 \leq \frac{\delta}{3}\right] \geq 1-\frac{1}{\exp(n)}.$$
We define $\widetilde{q}_{\rm A}={\rm Alg}(q)$. By Lemma~\ref{function}, when $||\widetilde{p}_{\rm A}-q||_1 \leq \delta/3$, 
it holds that
$$||\widetilde{p}_{\rm A}-\widetilde{q}_{\rm A}||_1 = ||\widetilde{p}_{\rm A}-{\rm Alg}(q)||_1 \leq \delta.$$ 
By the sampling algorithm described above, $\widetilde{q}_{\rm A}$ is classically samplable in polynomial time. 
\end{proof}

\subsection{Applications of Theorem~\ref{basic}}\label{applications}

We first deal with an input-noise model where $D_{\varepsilon}$ is applied to each qubit (initialized to $|0\rangle$) only 
at the start of computation, and consider IQP circuits under this input-noise model as depicted in Fig.~\ref{figure2}(a). 
We show that, when noise model~{\bf A} is replaced with this input-noise model, Theorem~\ref{basic} holds for IQP circuits. 
Let $C=H^{\otimes n}DH^{\otimes n}$ be an arbitrary IQP circuit on $n$ qubits, where $D$ is a polynomial-size quantum circuit 
consisting of $Z$, $CZ$, and $CCZ$ gates. Let $\widetilde{p}_{\rm in}$ be the resulting probability distribution over $\{0,1\}^n$. 
The input state $|0^n\rangle$ affected by noise is represented as
$$\left[\left(1-\frac{\varepsilon}{2}\right)|0\rangle\langle0|+\frac{\varepsilon}{2}|1\rangle\langle1|\right]^{\otimes n}
=\sum_{y\in\{0,1\}^n}\left(1-\frac{\varepsilon}{2}\right)^{n-|y|}\left(\frac{\varepsilon}{2}\right)^{|y|}X^y|0^n\rangle\langle0^n|X^y.
$$
A direct calculation shows that $\widetilde{p}_{\rm in} = \widetilde{p}_{\rm A}$, where 
$\widetilde{p}_{\rm A}$ is the output probability distribution of $C$ under noise model~{\bf A} (with rate $\varepsilon$). 
This is because $H^{\otimes n}DH^{\otimes n}X^y = X^yH^{\otimes n}DH^{\otimes n}$ for any $y\in\{0,1\}^n$. Thus, when 
the output probability distribution of $C$ is anti-concentrated, $\widetilde{p}_{\rm A}$ is classically simulatable 
by Theorem~\ref{basic}, and so is $\widetilde{p}_{\rm in}$. Similarly, when noise model~{\bf B} is replaced 
with the input-noise model, Theorem~\ref{general} holds for IQP circuits. To show this, it suffices to 
represent the input state $|0^n\rangle$ affected by noise as
$$\bigotimes_{j=1}^n\left[\left(1-\frac{\varepsilon_j}{2}\right)|0\rangle\langle0|+\frac{\varepsilon_j}{2}|1\rangle\langle1|\right]
=\sum_{y\in\{0,1\}^n}\left[\prod_{j=1}^n\left(1-\frac{\varepsilon_j}{2}\right)^{1-y_j}\left(\frac{\varepsilon_j}{2}\right)^{y_j}\right]
X^y|0^n\rangle\langle0^n|X^y
$$
and to apply the relation $H^{\otimes n}DH^{\otimes n}X^y = X^yH^{\otimes n}DH^{\otimes n}$ as above.

\begin{figure}
\centering
\includegraphics[scale=.37]{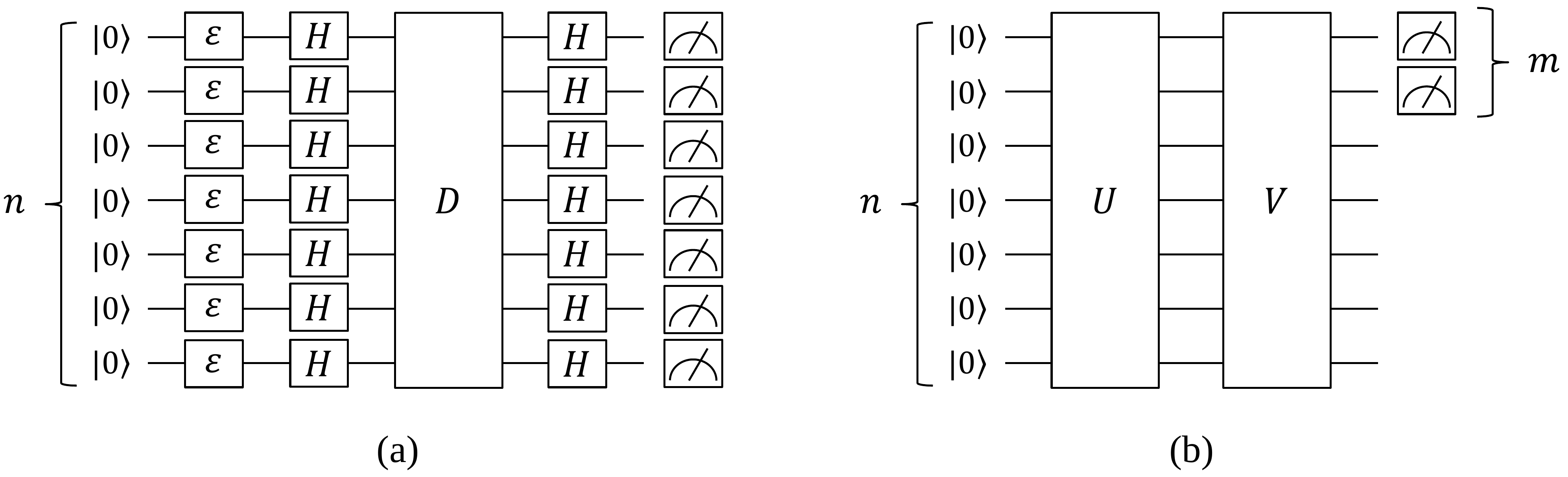}
\caption{(a): IQP circuit $C=H^{\otimes n}DH^{\otimes n}$ under the input-noise model, where $D$ is a 
polynomial-size quantum circuit consisting of $Z$, $CZ$, and $CCZ$ gates. The depolarizing 
channel $D_{\varepsilon}$, which is represented as $\varepsilon$ in this figure, is applied to each qubit 
initialized to $|0\rangle$. (b): CT-ECS circuit $C=VU$ in the noise-free setting, where $m=O(\log n)$.}
\label{figure2}
\end{figure}

We then consider CT-ECS circuits in the noise-free setting as depicted in Fig.~\ref{figure2}(b). 
Using a proof method similar to the one of Theorem~\ref{basic}, we show that, when only $O(\log n)$ qubits are measured, 
any quantum circuit in a class of CT-ECS circuits on $n$ qubits is classically simulatable in the noise-free setting. 
Let $C=VU$ be an arbitrary  CT-ECS circuit on $n$ qubits such that only $m=O(\log n)$ qubits are measured, and let $p$ be its 
output probability distribution over $\{0,1\}^m$. We can represent $p$ as
$$p(x)=\sum_{s \in\{0,1\}^m}\widehat{p}(s)(-1)^{s\cdot x}.$$ 
Since $m=O(\log n)$, roughly speaking, $p(x)$ can be computed as the sum of a polynomial number of the Fourier 
coefficients. A direct calculation similar to the proof of Lemma~\ref{fourier-rep} implies that
$$\widehat{p}(s)=\frac{1}{2^m}\langle 0^n|U^\dag V^\dag (Z^s \otimes I_{n-m}) VU|0^n\rangle$$
for any $s=s_1\cdots s_m \in\{0,1\}^m$, where $I_{n-m}$ is the identity on the qubits that are not measured. By an argument similar to the 
proof of Lemma~\ref{appro-fourier}, we want to approximate $\widehat{p}(s)$ for all $s\in\{0,1\}^m$. However, in the present case, 
$V^\dag (Z^s \otimes I_{n-m}) V$ is the product of at most $m$ ECS operations, which is not always ECS.

We assume that $V^\dag  (Z^s \otimes I_{n-m}) V$ is ECS for all $s\in\{0,1\}^m$. In this case, 
all the Fourier coefficients can be approximated classically in polynomial time with polynomial accuracy. Then, we define a 
function $q$ over $\{0,1\}^m$ as
$$q(x)=\sum_{s \in\{0,1\}^m}\widehat{p}'(s)(-1)^{s\cdot x},$$ 
where $\widehat{p}'(s)$ is the approximate value of $\widehat{p}(s)$ for all $s\in\{0,1\}^m$ and $\widehat{p}'(0^m)=1/2^m$. 
Since $||p-q||_1$ is upper-bounded by some inverse polynomial in $n$, the classical sampling 
algorithm described in Section~\ref{sampling} implies that $p$ is classically simulatable. More precisely, it 
can be approximated by a classically samplable probability distribution, which is ${\rm Alg}(q)$, with polynomial accuracy in $l_1$ norm.

For example, we consider a CT-ECS$_1$ circuit, which is a CT-ECS circuit such that the associated ECS operation 
$V^\dag Z_jV$ is ECS$_1$. As described at the end of Section~\ref{parallel1}, this means that, for any $x \in \{0,1\}^n$, 
$V^\dag Z_j V|x\rangle$ can be represented as a linear combination of at most $O(1)$ computational basis states. 
In this case, a simple calculation shows that $V^\dag  (Z^s \otimes I_{n-m}) V$ is ECS for all $s\in\{0,1\}^m$, since it is the product of at 
most $m$ ECS$_1$ (not ECS) operations. Thus, when only $O(\log n)$ qubits are measured, any CT-ECS$_1$ circuit 
on $n$ qubits is classically simulatable in the noise-free setting.

\section{CT-ECS Circuits under Noise Model~{\bf B}}\label{extension}

In this section, we prove Theorem~\ref{general}. Its precise statement is as follows:
\begin{theorem}
Let $C$ be an arbitrary CT-ECS circuit on $n$ qubits such that its output probability 
distribution $p$ over $\{0,1\}^n$ satisfies $\sum_{x\in \{0,1\}^n} p(x)^2 \leq \alpha/2^n$ 
for some known constant $\alpha \geq 1$. Let $0 < \delta < 1$ be an arbitrary constant. We assume that
\begin{itemize}
\item a depolarizing channel with (possibly unknown) constant rate $0 < \varepsilon_j < 1$ is applied to the $j$-th 
qubit after performing $C$ for any $1 \leq j \leq n$, which yields the probability 
distribution $\widetilde{p}_{\rm B}$ over $\{0,1\}^n$,
\item it is possible to choose a constant $\lambda_{\min}$ such that 
$$1 \leq \frac{\varepsilon_{\min}}{\lambda_{\min}} 
\leq 1+\frac{1}{\frac{10\sqrt{\alpha}}{\delta}\log\frac{10\sqrt{\alpha}}{\delta}},$$
where $\varepsilon_{\min} = \min \{\varepsilon_j |1 \leq j \leq n\}$, and

\item it is possible to choose a constant $\lambda_j$ such that 
$$0 \leq \varepsilon_j-\lambda_j  \leq 
\frac{\lambda_{\min}}{\frac{10\sqrt{\alpha}}{\delta}\log \frac{10\sqrt{\alpha}}{\delta}}$$
for any $1 \leq j \leq n$ with $\varepsilon_j \neq \varepsilon_{\min}$, where all numbers $j$ with 
$\varepsilon_j \neq \varepsilon_{\min}$ are known.
\end{itemize}
Then, there exists a polynomial-time randomized algorithm which outputs (a classical 
description of) a probability distribution $\widetilde{q}_{\rm B}$ over $\{0,1\}^n$ such that 
$${\rm Pr}\left[||\widetilde{p}_{\rm B}-\widetilde{q}_{\rm B}||_1 
\leq \left(1+\frac{1}{1-\lambda_{\min}}\right)\delta\right] \geq 1-\frac{1}{\exp(n)}$$ 
and $\widetilde{q}_{\rm B}$ is classically samplable in polynomial time.
\end{theorem}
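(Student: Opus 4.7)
The plan is to exploit the Fourier expansion of $\widetilde{p}_{\rm B}$ given in Section~\ref{fourier} to decompose noise model~{\bf B} as noise model~{\bf A} with uniform rate $\varepsilon_{\min}$ followed by additional single-qubit bit-flip noise, invoke Theorem~\ref{basic} for the uniform part, and simulate the residual bit flips classically using the approximations $\lambda_j$.

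First I would set $\eta_j=(\varepsilon_j-\varepsilon_{\min})/(1-\varepsilon_{\min})$, so that $(1-\varepsilon_{\min})(1-\eta_j)=1-\varepsilon_j$ and hence $\prod_j(1-\varepsilon_j)^{s_j}=(1-\varepsilon_{\min})^{|s|}\prod_j(1-\eta_j)^{s_j}$. Comparing with the Fourier expansion of $\widetilde{p}_{\rm B}$, this identity shows $\widetilde{p}_{\rm B}=\mathcal{B}(\widetilde{p}_{\rm A})$, where $\widetilde{p}_{\rm A}$ is the (hypothetical) distribution arising from noise model~{\bf A} with rate $\varepsilon_{\min}$ and $\mathcal{B}$ is the product bit-flip channel that flips bit $j$ with probability $\eta_j/2$. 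Analogously I set $\mu_j=(\lambda_j-\lambda_{\min})/(1-\lambda_{\min})$ and let $\widetilde{\mathcal{B}}$ denote the product bit-flip channel with rates $\mu_j/2$. The algorithm is to apply Theorem~\ref{basic} with constants $\alpha$, $\delta$, and $\lambda_{\min}$ to obtain a distribution $\widetilde{q}_{\rm A}$ satisfying $||\widetilde{p}_{\rm A}-\widetilde{q}_{\rm A}||_1\leq\delta$ with high probability, and then to output $\widetilde{q}_{\rm B}=\widetilde{\mathcal{B}}(\widetilde{q}_{\rm A})$; this is a probability distribution that can be sampled in polynomial time by drawing $y\sim\widetilde{q}_{\rm A}$ and then flipping each bit $j$ with $\varepsilon_j\neq\varepsilon_{\min}$ independently with probability $\mu_j/2$.

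The triangle inequality then gives
$$||\widetilde{p}_{\rm B}-\widetilde{q}_{\rm B}||_1\leq||\widetilde{\mathcal{B}}(\widetilde{p}_{\rm A})-\widetilde{\mathcal{B}}(\widetilde{q}_{\rm A})||_1+||\mathcal{B}(\widetilde{p}_{\rm A})-\widetilde{\mathcal{B}}(\widetilde{p}_{\rm A})||_1.$$
The first term is bounded by $||\widetilde{p}_{\rm A}-\widetilde{q}_{\rm A}||_1\leq\delta$ because any product bit-flip channel is $l_1$-contractive. The second term quantifies the error introduced by using $\mu_j$ in place of the unknown $\eta_j$.

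The main obstacle is bounding this second term by roughly $\delta/(1-\lambda_{\min})$, which I would do by a Fourier argument modeled on Lemma~\ref{coefficients}. The Fourier coefficient of $\mathcal{B}(\widetilde{p}_{\rm A})-\widetilde{\mathcal{B}}(\widetilde{p}_{\rm A})$ equals
$$\left[\prod_j(1-\varepsilon_j)^{s_j}-\left(\frac{1-\varepsilon_{\min}}{1-\lambda_{\min}}\right)^{|s|}\prod_j(1-\lambda_j)^{s_j}\right]\widehat{p}(s),$$
which I split at the same threshold $c$ as in Theorem~\ref{basic}. The contribution from $|s|>c$ is controlled by $(1-\lambda_{\min})^{2c}\cdot\alpha/2^{2n}$ together with anti-concentration, exactly as in Lemma~\ref{coefficients}. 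For $|s|\leq c$, I decompose the bracketed factor as $[\prod_j(1-\varepsilon_j)^{s_j}-\prod_j(1-\lambda_j)^{s_j}]+\prod_j(1-\lambda_j)^{s_j}[1-((1-\varepsilon_{\min})/(1-\lambda_{\min}))^{|s|}]$, bound each summand by telescoping together with the hypotheses $\varepsilon_j-\lambda_j\leq\lambda_{\min}/K$ and $\varepsilon_{\min}-\lambda_{\min}\leq\lambda_{\min}/K$ where $K=(10\sqrt{\alpha}/\delta)\log(10\sqrt{\alpha}/\delta)$, and observe that the second summand carries the extra $1/(1-\lambda_{\min})$ factor. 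Squaring, summing with $\sum_s\widehat{p}(s)^2\leq\alpha/2^{2n}$, and applying $||f||_1^2\leq 2^{2n}||\widehat{f}||_2^2$ then yields the claimed bound $(1+1/(1-\lambda_{\min}))\delta$.
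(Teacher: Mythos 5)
Your proposal is correct and rests on the same central idea as the paper: factor noise model~{\bf B} as noise model~{\bf A} at rate $\varepsilon_{\min}$ followed by residual independent bit-flips with rates $\eta_j/2=(\varepsilon_j-\varepsilon_{\min})/(2(1-\varepsilon_{\min}))$, simulate the uniform part, and post-process with the approximate flips $\mu_j/2$. Your $\eta_j$, $\mu_j$, and $\widetilde{\mathcal B}$ are exactly the paper's $\delta_j$, $\delta_j'$, and $T^1_{\delta_1'}\cdots T^n_{\delta_n'}$, and your output $\widetilde{\mathcal B}(\widetilde q_{\rm A})$ is the paper's $\widetilde q_{\rm B}$. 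Where you genuinely diverge is in the error accounting. The paper uses a three-term triangle inequality that routes the noise-mismatch error through the low-degree truncation $q$ built inside the proof of Theorem~\ref{basic}: its Lemma~\ref{coin}(iii) bounds $||T^1_{\delta_1}\cdots T^n_{\delta_n}q-T^1_{\delta_1'}\cdots T^n_{\delta_n'}q||_1\leq c\beta\sqrt{\alpha+1}$, which is clean precisely because $q$ carries no Fourier weight above degree $c$, so the telescoping factor is at most $c\beta$ and there is no tail to control. You instead invoke Theorem~\ref{basic} as a black box and measure the mismatch on $\widetilde p_{\rm A}$ itself, which has full Fourier support, so you must redo the low/high-degree split: the tail is killed by the damping $(1-\varepsilon_{\min})^{|s|}$ as in Lemma~\ref{coefficients}, and the low-degree bracket telescopes to $O(c\lambda_{\min}/K)\cdot(1+1/(1-\lambda_{\min}))$, which comfortably fits under the target $(1+1/(1-\lambda_{\min}))\delta$. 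Both routes work; yours buys modularity (no need to open up the proof of Theorem~\ref{basic}) at the cost of an extra tail estimate. Two small points you should make explicit: one may assume $\lambda_j\geq\lambda_{\min}$ without loss of generality (replacing any smaller $\lambda_j$ by $\lambda_{\min}$ still satisfies the hypothesis), which is needed both so that the flip probabilities $\mu_j/2$ are nonnegative and so that the bound $\prod_j(1-\lambda_j)^{s_j}\leq(1-\lambda_{\min})^{|s|}$ used in your tail estimate actually holds; and $\lambda_j$ must also be defined (as $\lambda_{\min}$) for the indices with $\varepsilon_j=\varepsilon_{\min}$ for your bracketed Fourier factor to be the correct one.
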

\noindent
Although the approximate accuracy of $\widetilde{q}_{\rm B}$ depends on $\lambda_{\min}$, a typical situation might be 
when $\varepsilon_{\min}$ is not so large, such as $\varepsilon_{\min} \leq 1/2$, and thus $\lambda_{\min} \leq 1/2$. 
In this case, $||\widetilde{p}_{\rm B}-\widetilde{q}_{\rm B}||_1 \leq 3\delta$.

We represent the effect of noise under noise model~{\bf B} 
as the effect under noise model~{\bf A} with rate $\varepsilon_{\min}$ and 
the remaining effects. To do this, for any $1 \leq j \leq n$ and $0 \leq \delta_j \leq 1$, we consider the noise 
operator $T^j_{\delta_j}$~\cite{Odonnel-book} on real-valued functions over $\{0,1\}^n$ defined as
$$T^j_{\delta_j}f(x) = {\rm E}_{y_j \sim N_{\delta_j}(x_j)}[f(x_1,\cdots,x_{j-1},y_j,x_{j+1},\ldots,x_n)],$$
where $y_j \sim N_{\delta_j}(x_j)$ means that the random string $y_j$ is drawn as
$$
y_j =\left\{
\begin{array}{cc}
x_j & {\rm with \ probability} \ 1-\delta_j/2,\\
1-x_j & {\rm otherwise}
\end{array}\right.
$$
for any $x_j \in \{0,1\}$. 
Let $p$ be an arbitrary probability distribution over $\{0,1\}^n$. It is easy to verify that 
$T^1_{\delta_1}\cdots T^n_{\delta_n}p(x)$ is equal to the 
probability of obtaining $x$ by sampling an $n$-bit string according to $p$ and then flipping its $j$-th 
bit with probability $\delta_j/2$ for any $1 \leq j \leq n$. Thus, the representation of $\widetilde{p}_{\rm B}$, 
which is described in Section~\ref{fourier}, is obtained as $T^1_{\varepsilon_1}\cdots T^n_{\varepsilon_n}p(x)$. 
In particular, the representation of $\widetilde{p}_{\rm A}$ is obtained as $T^1_{\varepsilon}\cdots T^n_{\varepsilon}p(x)$. 
A key property of the operator 
$T^1_{\delta_1}\cdots T^n_{\delta_n}$ is that it is a contraction operator, i.e., 
$||T^1_{\delta_1}\cdots T^n_{\delta_n}f||_1 \leq ||f||_1$ for any real-valued function $f$ over $\{0,1\}^n$~\cite{Odonnel-book}. 
This can be simply shown by applying the triangle inequality for real numbers.

Combining Lemma~\ref{coefficients} with these basic facts on the noise operator, we show the following lemma on the 
representation of $\widetilde{p}_{\rm B}$ and its approximability:
\begin{lemma}\label{coin}
Let $C$ be an arbitrary CT-ECS circuit on $n$ qubits such that its output probability 
distribution $p$ satisfies $\sum_{x\in \{0,1\}^n} p(x)^2 \leq \alpha/2^n$ 
for some known constant $\alpha \geq 1$. Let $0 < \delta < 1$ be an arbitrary constant. We assume that
\begin{itemize}
\item a depolarizing channel with constant rate $0 < \varepsilon_j < 1$ is applied to the $j$-th 
qubit after performing $C$ for any $1 \leq j \leq n$, which yields the probability 
distribution $\widetilde{p}_{\rm B}$, and
\item it is possible to choose a constant $\lambda_{\min}$ such that 
$$1 \leq \frac{\varepsilon_{\min}}{\lambda_{\min}} \leq 1+ \frac{1}{\frac{10\sqrt{\alpha}}{\delta}\log\frac{10\sqrt{\alpha}}{\delta}},$$
where $\varepsilon_{\min} = \min \{\varepsilon_j |1 \leq j \leq n\}$.
\end{itemize}
Then, the following items hold:
\begin{itemize}
\item[\rm (i)] $\widetilde{p}_{\rm B}(x)=T^1_{\delta_1}\cdots T^n_{\delta_n}\widetilde{p}_{\rm A}(x)$, 
where
$$\delta_j=\frac{\varepsilon_j -\varepsilon_{\min}}{1 - \varepsilon_{\min}},\ \
\widetilde{p}_{\rm A}(x)=\sum_{s \in\{0,1\}^n}(1-\varepsilon_{\min})^{|s|}\widehat{p}(s)(-1)^{s\cdot x}.$$

\item[\rm (ii)] There exists a polynomial-time randomized algorithm which outputs the Fourier coefficients of 
a function $q$ over $\{0,1\}^n$ such that 
$${\rm Pr}\left[||\widetilde{p}_{\rm A}-q||_1 \leq \frac{\delta}{3}\right] \geq 1-\frac{1}{\exp(n)}.$$

\item[\rm (iii)] Let $0 \leq \delta_j' \leq 1$ be an arbitrary constant for any $1 \leq j \leq n$. We assume that there exists a 
constant $\beta \geq 0$ such that $|\delta_j - \delta_j'| \leq \beta$ for any $1 \leq j \leq n$. Then, 
$${\rm Pr}\left[||T^1_{\delta_1}\cdots T^n_{\delta_n}q -T^1_{\delta_1'}\cdots T^n_{\delta_n'}q||_1 \leq c\beta\sqrt{\alpha+1}\right]
\geq 1-\frac{1}{\exp(n)},$$
where $c = \left\lceil\frac{1}{\lambda_{\min}}\log \frac{10\sqrt{\alpha}}{\delta}\right\rceil$.
\end{itemize}
\end{lemma}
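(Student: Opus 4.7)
The plan is to handle the three parts in sequence; parts~(i) and~(ii) are essentially by inspection once the right Fourier-domain picture is set up, and part~(iii) is the main technical step.

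For part~(i), I would work in the Fourier basis. It is standard (and immediate from the definition of $T^j_{\delta_j}$) that this operator acts on a Fourier expansion by scaling each coefficient $\widehat{f}(s)$ by $(1-\delta_j)^{s_j}$. Applied to the Fourier expansion of $\widetilde{p}_{\rm A}$ (with noise rate $\varepsilon_{\min}$), the composition $T^1_{\delta_1}\cdots T^n_{\delta_n}$ multiplies the coefficient $(1-\varepsilon_{\min})^{|s|}\widehat{p}(s)$ by $\prod_j(1-\delta_j)^{s_j}$. For $\delta_j=(\varepsilon_j-\varepsilon_{\min})/(1-\varepsilon_{\min})$, a one-line algebraic check gives $(1-\delta_j)(1-\varepsilon_{\min})=1-\varepsilon_j$, so the resulting expansion matches, term by term, the Fourier expansion of $\widetilde{p}_{\rm B}$ recalled in Section~\ref{fourier}.

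Part~(ii) then follows immediately from Lemma~\ref{coefficients}: with $\varepsilon_{\min}$ and $\lambda_{\min}$ playing the roles of $\varepsilon$ and $\lambda$, the hypotheses on the noise rate and its approximation coincide, and the $\widetilde{p}_{\rm A}$ identified in part~(i) is exactly the probability distribution that Lemma~\ref{coefficients} approximates in $l_1$ norm by a function $q$ of degree at most $c=\lceil\frac{1}{\lambda_{\min}}\log\frac{10\sqrt{\alpha}}{\delta}\rceil$.

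For part~(iii), I would expand both $T^1_{\delta_1}\cdots T^n_{\delta_n}q$ and $T^1_{\delta_1'}\cdots T^n_{\delta_n'}q$ in the Fourier basis, noting that each inherits from $q$ the property of being supported on degrees $|s|\leq c$. The $s$-th Fourier coefficient of their difference is $[\prod_j(1-\delta_j)^{s_j}-\prod_j(1-\delta_j')^{s_j}]\widehat{q}(s)$, and the elementary bound $|\prod a_j-\prod b_j|\leq\sum_j|a_j-b_j|$ valid for factors in $[0,1]$ gives $|s|\beta|\widehat{q}(s)|\leq c\beta|\widehat{q}(s)|$. Applying $\|f\|_1^2\leq 2^{2n}\|\widehat{f}\|_2^2$ from Section~\ref{fourier} and pulling out the $c\beta$ reduces the claim to showing $2^{2n}\|\widehat{q}\|_2^2\leq\alpha+1$ on the good event from part~(ii). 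Using $\widehat{q}(s)=(1-\lambda_{\min})^{|s|}\widehat{p}'(s)$ (so $\widehat{q}(s)^2\leq\widehat{p}'(s)^2$) and expanding $\widehat{p}'(s)^2=\widehat{p}(s)^2+2\widehat{p}(s)(\widehat{p}'(s)-\widehat{p}(s))+(\widehat{p}'(s)-\widehat{p}(s))^2$, the main term is bounded by $\sum_s\widehat{p}(s)^2\leq\alpha/2^{2n}$ via Parseval and the anti-concentration assumption; the cross term is controlled by Cauchy--Schwarz on $\sum_{|s|\leq c}|\widehat{p}(s)|$ together with the count $n^c+1$ and the tight error bound $1/(2^nf(n))$; and the quadratic error term is $(n^c+1)/(2^nf(n))^2$. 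With $f(n)=10(n^c+1)/\delta$ inherited from the proof of Lemma~\ref{coefficients}, both extra contributions are negligible at scale $1/2^{2n}$, giving $\sum_{|s|\leq c}\widehat{p}'(s)^2\leq(\alpha+1)/2^{2n}$, as needed.

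The main obstacle is the quantitative bookkeeping in part~(iii): the additive Fourier-approximation errors and the anti-concentration contribution must be combined into a single clean $\sqrt{\alpha+1}$ factor, which requires tracking how $c$, $f(n)$, and the $2^n$ normalization interact through Parseval. Once that bound is in place, the rest of the argument is structural and rests entirely on the Fourier-domain description of $T^1_{\delta_1}\cdots T^n_{\delta_n}$ and the standard $l_1$-$l_2$-Fourier inequalities stated in Section~\ref{fourier}.
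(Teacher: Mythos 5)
Your proposal is correct and follows essentially the same route as the paper: the Fourier-domain factorization $(1-\delta_j)(1-\varepsilon_{\min})=1-\varepsilon_j$ for (i), a direct invocation of Lemma~\ref{coefficients} with $\varepsilon_{\min},\lambda_{\min}$ for (ii), and for (iii) the telescoping bound $|\prod(1-\delta_j)^{s_j}-\prod(1-\delta_j')^{s_j}|\leq |s|\beta\leq c\beta$ combined with $\|f\|_1^2\leq 2^{2n}\|\widehat{f}\|_2^2$ and the estimate $2^{2n}\sum_{|s|\leq c}\widehat{q}(s)^2\leq\alpha+1$. The only cosmetic difference is that you control the cross term via Cauchy--Schwarz where the paper simply uses $|\widehat{p}(s)|\leq 1/2^n$; both yield the same conclusion.
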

\begin{proof}
(i): As described above, the probability distribution $\widetilde{p}_{\rm B}$ is represented as
$$\widetilde{p}_{\rm B}(x)=T^1_{\varepsilon_1}\cdots T^n_{\varepsilon_n}p(x)
=\sum_{s \in\{0,1\}^n}\left[\prod_{j=1}^n  (1-\varepsilon_j)^{s_j}\right]\widehat{p}(s)(-1)^{s\cdot x}.$$
We can transform this representation of $\widetilde{p}_{\rm B}$ as follows:
\begin{align*}
\widetilde{p}_{\rm B}(x) & =\sum_{s \in\{0,1\}^n}\left[\prod_{j =1}^n \frac{(1-\varepsilon_j)^{s_j}}{(1-\varepsilon_{\min})^{s_j}}\right]
(1-\varepsilon_{\min})^{|s|}\widehat{p}(s)(-1)^{s\cdot x}\\
& = \sum_{s \in\{0,1\}^n}\left[ \prod_{j =1}^n
\left(1 - \frac{\varepsilon_j -\varepsilon_{\min}}{1 - \varepsilon_{\min}}\right)^{s_j}\right]
(1-\varepsilon_{\min})^{|s|}\widehat{p}(s)(-1)^{s\cdot x}\\
& = T^1_{\delta_1}\cdots T^n_{\delta_n}\sum_{s \in\{0,1\}^n}
(1-\varepsilon_{\min})^{|s|}\widehat{p}(s)(-1)^{s\cdot x} =T^1_{\delta_1}\cdots T^n_{\delta_n}\widetilde{p}_{\rm A}(x),
\end{align*}
which is the desired representation.\\
(ii): The probability distribution $\widetilde{p}_{\rm A}$ can be regarded as the output probability distribution of $C$ 
under noise model~{\bf A} with rate $\varepsilon_{\min}$. Since we have an approximate value $\lambda_{\min}$ 
of $\varepsilon_{\min}$, we can apply the proof of Lemma~\ref{coefficients}. More concretely, we define
$$c = \left\lceil\frac{1}{\lambda_{\min}}\log \frac{10\sqrt{\alpha}}{\delta}\right\rceil.$$
Since $ 0 < \lambda_{\min} < 1$ and $10\sqrt{\alpha}/\delta > 10$, $c >3$. We can 
obtain $\widehat{p}'(s)$ for all $s \in \{0,1\}^n\setminus \{0^n\}$ with $|s| \leq c$ such that
$${\rm Pr}\left[|\widehat{p}(s) - \widehat{p}'(s)| \leq \frac{1}{2^n f(n)}\right] \geq 1-\frac{1}{\exp(n)},$$
where $f(n)=10(n^c+1)/\delta$. We define $\widehat{p}'(0^n)=1/2^n$, which is equal to $\widehat{p}(0^n)$, and 
$\widehat{p}'(s)=0$ for all $s \in \{0,1\}^n$ with $|s| > c$. It holds that
$${\rm Pr}\left[\forall s \in \{0,1\}^n \ {\rm with} \ |s| \leq c,\ |\widehat{p}(s) - \widehat{p}'(s)| \leq \frac{1}{2^n f(n)}\right] 
\geq 1-\frac{1}{\exp(n)}.$$
Moreover, we define a function $q$ over $\{0,1\}^n$ as
$$q(x) = \sum_{s\in\{0,1\}^n, |s| \leq c}(1-\lambda_{\min})^{|s|}\widehat{p}'(s)(-1)^{s\cdot x}$$ 
and it holds that $||\widetilde{p}_{\rm A}-q||_1 \leq \delta/3$ 
under the assumption that, for all $s \in \{0,1\}^n$ with $|s| \leq c$, $|\widehat{p}(s) - \widehat{p}'(s)| \leq \frac{1}{2^n f(n)}$.\\
(iii): The function $T^1_{\delta_1}\cdots T^n_{\delta_n}q$ over $\{0,1\}^n$ can be represented as
$$T^1_{\delta_1}\cdots T^n_{\delta_n}q(x)
=\sum_{s:|s| \leq c}\left[\prod_{j=1}^n  (1-\delta_j)^{s_j}\right]\widehat{q}(s)(-1)^{s\cdot x},$$
where $\widehat{q}(s)=(1-\lambda_{\min})^{|s|}\widehat{p}'(s)$. Since 
$T^1_{\delta_1'}\cdots T^n_{\delta_n'}q$ can be represented similarly,
$$T^1_{\delta_1}\cdots T^n_{\delta_n}q(x) -T^1_{\delta_1'}\cdots T^n_{\delta_n'}q(x)
=\sum_{s:|s| \leq c}\left[\prod_{j=1}^n  (1-\delta_j)^{s_j}-\prod_{j=1}^n  (1-\delta_j')^{s_j}\right]\widehat{q}(s)(-1)^{s\cdot x}.$$
This representation with the relation described in Section~\ref{fourier} implies that
\begin{equation}\label{delta}
||T^1_{\delta_1}\cdots T^n_{\delta_n}q -T^1_{\delta_1'}\cdots T^n_{\delta_n'}q||_1^2 
\leq 2^{2n}\sum_{s:|s| \leq c}\left[\prod_{j=1}^n  (1-\delta_j)^{s_j}-\prod_{j=1}^n  (1-\delta_j')^{s_j}\right]^2\widehat{q}(s)^2.
\end{equation}
A simple calculation shows that, for any $s\in \{0,1\}^n$ with $s_{j_1}=\cdots =s_{j_{|s|}}=1$,
$$\left|\prod_{j=1}^n  (1-\delta_j)^{s_j}-\prod_{j=1}^n  (1-\delta_j')^{s_j}\right|
=\left|\prod_{k=1}^{|s|}  (1-\delta_{j_k})-\prod_{k=1}^{|s|}  (1-\delta_{j_k}')\right| \leq \sum_{k=1}^{|s|}|\delta_{j_k}-\delta_{j_k}'|.$$
Combining this with the inequality (\ref{delta}) and the assumption that $|\delta_j - \delta_j'| \leq \beta$ for any $1 \leq j \leq n$,
it holds that
$$||T^1_{\delta_1}\cdots T^n_{\delta_n}q -T^1_{\delta_1'}\cdots T^n_{\delta_n'}q||_1^2 
\leq 2^{2n}\sum_{s:|s| \leq c}(|s|\beta)^2\widehat{q}(s)^2 \leq c^2\beta^22^{2n}\sum_{s:|s| \leq c}\widehat{q}(s)^2.$$

The remaining problem is to show that $2^{2n}\sum_{s:|s| \leq c}\widehat{q}(s)^2 \leq \alpha+1$ under the assumption that, 
for all $s \in \{0,1\}^n$ with $|s| \leq c$, $|\widehat{p}(s) - \widehat{p}'(s)| \leq \frac{1}{2^n f(n)}$. 
This can be done by using the bounds $|\widehat{p}(s)| \leq 1/2^n$, $|\{s \in \{0,1\}^n ||s|\leq c\}|\leq n^c +1$, and 
$\sum_{x\in \{0,1\}^n} p(x)^2 \leq \alpha/2^n$ as follows:
\begin{align*}
& 2^{2n}\sum_{s:|s| \leq c}\widehat{q}(s)^2 
\leq  2^{2n}\sum_{s:|s| \leq c}\widehat{p}'(s)^2 \leq 2^{2n}\sum_{s:|s| \leq c}(|\widehat{p}(s)| + |\widehat{p}(s) - \widehat{p}'(s)|)^2 \\
& \leq 2^{2n}\sum_{s:|s| \leq c}\left(|\widehat{p}(s)| + \frac{1}{2^nf(n)}\right)^2 
= 2^{2n}\sum_{s:|s| \leq c}\left(\widehat{p}(s)^2 + \frac{2|\widehat{p}(s)|}{2^nf(n)}+\frac{1}{2^{2n}f(n)^2}\right) \\
& \leq  2^{n}\sum_{x\in \{0,1\}^n}p(x)^2 + \frac{2(n^c+1)}{f(n)}+\frac{n^c+1}{f(n)^2}
\leq \alpha +\frac{\delta}{5} +\frac{\delta^2}{100(n^c+1)} \leq \alpha+1,
\end{align*}
which is the desired bound.
\end{proof}

Lemma~\ref{coin} implies Theorem~\ref{general} as follows:
\begin{proof}[Proof of Theorem~\ref{general}]
Using the function $q$ obtained by Lemma~\ref{coin}, we define $\widetilde{q}_{\rm A}={\rm Alg}(q)$ 
and $\widetilde{q}_{\rm B}=T^1_{\delta_1'}\cdots T^n_{\delta_n'}\widetilde{q}_{\rm A}$, where
$$\delta_j'=\frac{\lambda_j -\lambda_{\min}}{1 - \lambda_{\min}}$$
for any $1 \leq j \leq n$ with $\varepsilon_j \neq \varepsilon_{\min}$ and $\delta_j'=0$ for any other $j$. 
In the following, we assume that, for all $s \in \{0,1\}^n$ with $|s| \leq c$, $|\widehat{p}(s) - \widehat{p}'(s)| \leq \frac{1}{2^n f(n)}$. 
In this case, $||\widetilde{p}_{\rm A}-q||_1 \leq \delta/3$. Thus, $||\widetilde{p}_{\rm A} - q||_1 + ||q - \widetilde{q}_{\rm A}||_1 \leq \delta$ 
by the proof of Lemma~\ref{function}. This implies that
\begin{align}\label{final}
||\widetilde{p}_{\rm B} - \widetilde{q}_{\rm B}||_1 
& = ||T^1_{\delta_1}\cdots T^n_{\delta_n}\widetilde{p}_{\rm A} - T^1_{\delta_1'}\cdots T^n_{\delta_n'}\widetilde{q}_{\rm A}||_1 \nonumber \\
& \leq ||T^1_{\delta_1}\cdots T^n_{\delta_n}\widetilde{p}_{\rm A} - T^1_{\delta_1}\cdots T^n_{\delta_n}q||_1
+ ||T^1_{\delta_1}\cdots T^n_{\delta_n}q - T^1_{\delta_1'}\cdots T^n_{\delta_n'}q||_1 \nonumber \\
& \hspace{.4cm} +||T^1_{\delta_1'}\cdots T^n_{\delta_n'}q - T^1_{\delta_1'}\cdots T^n_{\delta_n'}\widetilde{q}_{\rm A}||_1 \nonumber \\
& \leq ||\widetilde{p}_{\rm A} - q||_1 + ||T^1_{\delta_1}\cdots T^n_{\delta_n}q - T^1_{\delta_1'}\cdots T^n_{\delta_n'}q||_1
+||q - \widetilde{q}_{\rm A}||_1\nonumber \\
& \leq \delta + ||T^1_{\delta_1}\cdots T^n_{\delta_n}q - T^1_{\delta_1'}\cdots T^n_{\delta_n'}q||_1,
\end{align}
where the second inequality is due to the fact that the noise operator is a contraction operator as described at the beginning 
of this section.

It is obvious that $|\delta_j-\delta_j'|=0$ for any $1 \leq j \leq n$ with $\varepsilon_j = \varepsilon_{\min}$. We show that
$$|\delta_j-\delta_j'|\leq \frac{\delta}{1 - \lambda_{\min}}\cdot 
\frac{\lambda_{\min}}{10\sqrt{\alpha}\log \frac{10\sqrt{\alpha}}{\delta}}$$
for any $1 \leq j \leq n$ with $\varepsilon_j \neq \varepsilon_{\min}$. 
In fact, $\delta_j-\delta_j'$ is upper-bounded by using the bounds $\lambda_{\min} \leq \varepsilon_{\min} <1$ and 
$\varepsilon_j - \lambda_j \leq \frac{\lambda_{\min}}{\frac{10\sqrt{\alpha}}{\delta}\log \frac{10\sqrt{\alpha}}{\delta}}$ as follows:
\begin{align*}
\delta_j-\delta_j' 
& = \left(1 - \frac{1-\varepsilon_j}{1-\varepsilon_{\min}} \right) 
- \left(1 - \frac{1-\lambda_j}{1-\lambda_{\min}} \right) 
= \frac{1-\lambda_j}{1-\lambda_{\min}} - \frac{1-\varepsilon_j}{1-\varepsilon_{\min}}\\
& \leq \frac{1-\lambda_j}{1-\lambda_{\min}} - \frac{1-\varepsilon_j}{1-\lambda_{\min}}
= \frac{\varepsilon_j-\lambda_j}{1-\lambda_{\min}}
\leq  \frac{\delta}{1 - \lambda_{\min}}\cdot \frac{\lambda_{\min}}{10\sqrt{\alpha} \log \frac{10\sqrt{\alpha}}{\delta}}.
\end{align*}
On the other hand, $\delta_j-\delta_j'$ is lower-bounded by using the bounds $\lambda_{\min} \leq \varepsilon_{\min}
\leq \lambda_{\min}+\frac{\lambda_{\min}}{{\frac{10\sqrt{\alpha}}{\delta}\log \frac{10\sqrt{\alpha}}{\delta}}}$ and 
$\lambda_j \leq \varepsilon_j$ as follows:
\begin{align*}
\delta_j-\delta_j' 
& =  \frac{\varepsilon_j - \varepsilon_{\min}}{1-\varepsilon_{\min}} -\frac{\lambda_j-\lambda_{\min}}{1-\lambda_{\min}} 
 \geq  \frac{\varepsilon_j - \varepsilon_{\min}}{1-\lambda_{\min}} - \frac{\lambda_j -\lambda_{\min}}{1 - \lambda_{\min}}
 =\frac{\varepsilon_j - \lambda_j}{1-\lambda_{\min}} - \frac{\varepsilon_{\min} -\lambda_{\min}}{1 - \lambda_{\min}} \\
& \geq - \frac{\varepsilon_{\min} -\lambda_{\min}}{1 - \lambda_{\min}}
 \geq -\frac{\delta}{1 - \lambda_{\min}}\cdot \frac{\lambda_{\min}}{10\sqrt{\alpha} \log \frac{10\sqrt{\alpha}}{\delta}}.
\end{align*}

By the definition of $c$, it holds that $2 < c-1 \leq \frac{1}{\lambda_{\min}}\log \frac{10\sqrt{\alpha}}{\delta}$. Thus, 
the above upper bound of $|\delta_j-\delta_j'|$ implies that
$$|\delta_j-\delta_j'|\leq \frac{\delta}{1 - \lambda_{\min}}\cdot 
\frac{\lambda_{\min}}{10\sqrt{\alpha}\log \frac{10\sqrt{\alpha}}{\delta}}
\leq \frac{\delta}{1 - \lambda_{\min}}\cdot \frac{1}{10(c-1)\sqrt{\alpha}}$$
for any $1 \leq j \leq n$. By Lemma~\ref{coin} with $\beta =\frac{\delta}{1 - \lambda_{\min}}\cdot \frac{1}{10(c-1)\sqrt{\alpha}}$,
$$||T^1_{\delta_1}\cdots T^n_{\delta_n}q - T^1_{\delta_1'}\cdots T^n_{\delta_n'}q||_1
\leq \frac{\delta}{1 - \lambda_{\min}}\cdot \frac{c\sqrt{\alpha+1}}{10(c-1)\sqrt{\alpha}}
\leq \frac{\delta}{1 - \lambda_{\min}},$$
since $c > 3$ and $\alpha \geq 1$. This bound with the inequality (\ref{final}) implies that
$$||\widetilde{p}_{\rm B} - \widetilde{q}_{\rm B}||_1 
\leq \delta + \frac{\delta}{1 - \lambda_{\min}} = \left(1+ \frac{1}{1 - \lambda_{\min}}\right)\delta.$$

The above analysis means that, to simulate $\widetilde{p}_{\rm B}$, it suffices to sample $\widetilde{q}_{\rm B}$, i.e., 
to sample $\widetilde{q}_{\rm A}$ and flip its $j$-th bit using a biased coin with probability
$$\frac{\delta_j'}{2}=\frac{\lambda_j -\lambda_{\min}}{2(1 - \lambda_{\min})}
=\frac{1}{2}\left(1-\frac{1-\lambda_j}{1 - \lambda_{\min}}\right)$$
of heads for any $1 \leq j \leq n$ with $\varepsilon_j \neq \varepsilon_{\min}$. By the sampling algorithm described in 
Section~\ref{sampling}, $\widetilde{q}_{\rm A}$ is classically samplable in polynomial time. Thus, $\widetilde{q}_{\rm B}$ is also 
classically samplable in polynomial time. We note that, although it may not be possible to flip such a biased coin with perfect 
accuracy, it suffices to flip a biased coin whose probability of heads is exponentially close to that of the above coin by computing the 
value $\frac{1-\lambda_j}{1 - \lambda_{\min}}$ with exponential accuracy.
\end{proof}

\section{Conclusions and Future Work}

We considered the effect of noise on the classical simulatability of CT-ECS circuits, such as IQP, Clifford Magic, 
conjugated Clifford, and constant-depth quantum circuits. We showed that, under noise model~{\bf A}, 
if an approximate value of the noise rate is known, any CT-ECS circuit with an anti-concentrated output probability distribution 
is classically simulatable. This indicates that the presence of small noise drastically affects the classical simulatability of CT-ECS 
circuits. We also considered noise model~{\bf B} where the noise rate can vary 
with each qubit, and provided a similar sufficient condition for classically simulating CT-ECS circuits with 
anti-concentrated output probability distributions.

Interesting challenges would be to investigate the effect of noise on the classical simulatability of other 
quantum circuits that are not classically simulatable in the noise-free setting (under plausible assumptions), 
such as quantum circuits for Shor's factoring 
and discrete logarithm algorithms~\cite{Shor}. At present, it seems difficult to apply our analysis directly to such quantum circuits 
based on the quantum Fourier transform. However, there exists a classical algorithm for simulating such circuits with sparse 
output probability distributions in the noise-free setting~\cite{Schwarz} and this algorithm might be 
useful in a noise setting. It would also be interesting to investigate fault-tolerant schemes to 
protect CT-ECS circuits from noise. As described in~\cite{Bremner3}, a simple repetition 
code can be used to protect IQP circuits under noise model~{\bf A} with a known rate, and the 
circuits produced by using the error-correcting code are also IQP circuits. Similarly, the simple repetition 
code can be used for CT-ECS circuits under noise model~{\bf A} with a known rate. However, 
the produced circuits are in general completely different from the original ones. This would decrease the possibility of 
implementing such circuits, and thus it would be interesting to study how to avoid this problem.



\bibliography{mybib}

\begin{thebibliography}{10}

\bibitem{Aaronson1}
S.~Aaronson and A.~Arkhipov.
\newblock The computational complexity of linear optics.
\newblock {\em Theory of Computing}, 9(4):143--252, 2013.

\bibitem{Juan}
J.~Bermejo-Vega, D.~Hangleiter, M.~Schwarz, R.~Raussendorf, and J.~Eisert.
\newblock Architectures for quantum simulation showing a quantum speedup.
\newblock {\em Physical Review X}, 8(2):021010, 2018.

\bibitem{Bouland}
A.~Bouland, J.~F. Fitzsimons, and D.~E. Koh.
\newblock Complexity classification of conjugated {C}lifford circuits.
\newblock In {\em Proceedings of the 33rd Computational Complexity Conference
  (CCC)}, volume 102 of {\em Leibniz International Proceedings in Informatics},
  pages 21:1--21:25, 2018.

\bibitem{Bremner}
M.~J. Bremner, R.~Jozsa, and D.~J. Shepherd.
\newblock Classical simulation of commuting quantum computations implies
  collapse of the polynomial hierarchy.
\newblock {\em Proceedings of the Royal Society A}, 467(2126):459--472, 2011.

\bibitem{Bremner2}
M.~J. Bremner, A.~Montanaro, and D.~J. Shepherd.
\newblock Average-case complexity versus approximate simulation of commuting
  quantum computations.
\newblock {\em Physical Review Letters}, 117(8):080501, 2016.

\bibitem{Bremner3}
M.~J. Bremner, A.~Montanaro, and D.~J. Shepherd.
\newblock Achieving quantum supremacy with sparse and noisy commuting quantum
  computations.
\newblock {\em Quantum}, 1:8, 2017.

\bibitem{Bu}
K.~Bu and D.~E. Koh.
\newblock Efficient classical simulation of {C}lifford circuits with
  nonstabilizer input states.
\newblock {\em Physical Review Letters}, 123(17):170502, 2019.

\bibitem{Fenner}
S.~Fenner, F.~Green, S.~Homer, and Y.~Zhang.
\newblock Bounds on the power of constant-depth quantum circuits.
\newblock In {\em Proceedings of Fundamentals of Computation Theory (FCT)},
  volume 3623 of {\em Lecture Notes in Computer Science}, pages 44--55, 2005.

\bibitem{Fujii2}
K.~Fujii and S.~Tamate.
\newblock Computational quantum-classical boundary of noisy commuting quantum
  circuits.
\newblock {\em Scientific Reports}, 6:25598, 2016.

\bibitem{Gao}
X.~Gao and L.~Duan.
\newblock Efficient classical simulation of noisy quantum computation, 2018.
\newblock arXiv:1810.03176.

\bibitem{Harrow}
A.~W. Harrow and A.~Montanaro.
\newblock Quantum computational supremacy.
\newblock {\em Nature}, 549:203--209, 2017.

\bibitem{Koh}
D.~E. Koh.
\newblock Further extensions of {C}lifford circuits and their classical
  simulation complexities.
\newblock {\em Quantum Information and Computation}, 17(3\&4):262--282, 2017.

\bibitem{Nielsen}
M.~A. Nielsen and I.~L. Chuang.
\newblock {\em Quantum Computation and Quantum Information}.
\newblock Cambridge University Press, 2000.

\bibitem{Nishimura-tcs}
H.~Nishimura and M.~Ozawa.
\newblock Computational complexity of uniform quantum circuit families and
  quantum {T}uring machines.
\newblock {\em Theoretical Computer Science}, 276(1--2):147--181, 2002.

\bibitem{Odonnel-book}
R.~O'Donnell.
\newblock {\em Analysis of Boolean Functions}.
\newblock Cambridge University Press, 2014.

\bibitem{Schwarz}
M.~Schwarz and M.~van~den Nest.
\newblock Simulating quantum circuits with sparse output distributions, 2013.
\newblock arXiv:1310.6749.

\bibitem{Shor}
P.~W. Shor.
\newblock Polynomial-time algorithms for prime factorization and discrete
  logarithms on a quantum computer.
\newblock {\em SIAM Journal on Computing}, 26(5):1484--1509, 1997.

\bibitem{Takeuchi}
Y.~Takeuchi and Y.~Takahashi.
\newblock Ancilla-driven instantaneous quantum polynomial time circuit for
  quantum supremacy.
\newblock {\em Physical Review A}, 94(6):062336, 2016.

\bibitem{Terhal}
B.~M. Terhal and D.~P. DiVincenzo.
\newblock Adaptive quantum computation, constant-depth quantum circuits and
  {A}rthur-{M}erlin games.
\newblock {\em Quantum Information and Computation}, 4(2):134--145, 2004.

\bibitem{van2}
M.~van~den Nest.
\newblock Simulating quantum computers with probabilistic methods.
\newblock {\em Quantum Information and Computation}, 11(9\&10):784--812, 2011.

\bibitem{Yoganathan}
M.~Yoganathan, R.~Jozsa, and S.~Strelchuk.
\newblock Quantum advantage of unitary {C}lifford circuits with magic state
  inputs.
\newblock {\em Proceedings of the Royal Society A}, 475(2225), 2019.

\end{thebibliography}

\end{document}